\documentclass{article}

\usepackage{arxiv}

\usepackage{times}
\usepackage{soul}
\usepackage{xurl}
\usepackage[hidelinks]{hyperref}
\usepackage[utf8]{inputenc}
\usepackage[small]{caption}
\usepackage{amsmath}
\usepackage{amsthm}
\usepackage{amssymb}
\usepackage{algorithm}
\usepackage{algorithmic}
\usepackage[switch]{lineno}
\usepackage{subcaption}

\usepackage[utf8]{inputenc} 
\usepackage[T1]{fontenc}    
\usepackage{hyperref}       
\usepackage{url}            
\usepackage{booktabs}       
\usepackage{amsfonts}       
\usepackage{nicefrac}       
\usepackage{microtype}      
\usepackage{lipsum}
\usepackage{graphicx}
\graphicspath{ {./images/} }

\newcommand{\mevgame}{\texttt{RST-Game}}
\newcommand{\ouralgo}{\texttt{RSYP}}
\newcommand{\shap}{\texttt{SHAP}}

\newtheorem{theorem}{Theorem}
\newtheorem{definition}{Definition}

\newtheorem{conjecture}{Conjecture}

\title{Shapley Value-based Approach for Redistributing Revenue of Matchmaking of Private Transactions in Blockchains}

\author{
 Rasheed \\
  Machine Learning Lab\\
  IIIT Hyderabad\\
  Hyderabad, India\\
  \texttt{mohammad.ahmed@research.iiit.ac.in}\\
   \And
 Parth Desai \\
  Machine Learning Lab\\
  IIIT Hyderabad\\
  Hyderabad, India\\
  \texttt{parth.desai@research.iiit.ac.in}\\
  \And
 Yash Chaurasia \\
  Machine Learning Lab\\
  IIIT Hyderabad\\
  Hyderabad, India\\
  \texttt{cayshvk@protonmail.com}\\
  \And
Sujt Gujar \\
  Machine Learning Lab\\
  IIIT Hyderabad\\
  Hyderabad, India\\
  \texttt{sujit.gujar@iiit.ac.in}\\
}

\begin{document}
\maketitle
\begin{abstract}
In the context of blockchain, MEV refers to the maximum value that can be extracted from block production through the inclusion, exclusion, or reordering of transactions. Searchers often participate in order flow auctions (OFAs) to obtain exclusive rights to private transactions, available through entities called \emph{matchmakers}, also known as order flow providers (OFPs). Most often, redistributing the revenue generated through such auctions among transaction creators is desirable. In this work, we formally introduce the \emph{matchmaking} problem in MEV, its desirable properties, and associated challenges. Using cooperative game theory, we formalize the notion of fair revenue redistribution in matchmaking and present its potential possibilities and impossibilities. Precisely, we define a characteristic form game, referred to as \mevgame, for the transaction creators. We propose to redistribute the revenue using the Shapley value of \mevgame. We show that the corresponding problem could be SUBEXP (i.e. $2^{o(n)}$, where $n$ is the number of transactions); therefore, approximating the Shapley value is necessary. Further, we propose a randomized algorithm for computing the Shapley value in \mevgame\ and empirically verify its efficacy. 
\end{abstract}


\section{Introduction}
\label{sec:intro}

As defined in ~\cite{swan2015blockchain}, \emph{blockchain} is a distributed ledger of digital transactions maintained by a network of computers (or nodes) that reach consensus using one of a variety of mechanisms, such as \emph{Proof-of-Work} (PoW) ~\cite{nakamoto2008bitcoin} or \emph{Proof-of-Stake} (PoS)~\cite{buterin2013ethereum}. \emph{Maximal Extractable Value} (MEV) in the blockchain context refers to ``the maximum value that can be extracted from block production in excess of the standard block reward and gas fees by including, excluding, and changing the order of transactions in a block''~\cite{ethereumMaximalExtractable}. MEV extraction is an integral part of the Ethereum blockchain -- it enhances revenue, supports sustainable post-block rewards, and contributes to the overall health of DeFi activities~\cite{chainMaximalExtractable},~\cite{burian2024futuremev}, ~\cite{torres2024rollingshadowsanalyzingextraction}. However, MEV extraction possesses certain negative externalities such as worse execution price to transaction creators due to MEV attacks such as front-running and sandwiching~\cite{luganodesLuganodesWhat}, consensus-security risks, centralization due to shift in economic structure among blockchain participants, and legal and ethical concerns~\cite{flashbots},~\cite{ramosMev},~\cite{ji2024regulatory}. 

As MEV extraction becomes increasingly vital for sustaining the Ethereum blockchain's economic ecosystem, developments in the Ethereum's landscape~\cite{chaurasia2024mev}, such as its transition to Proof-of-Stake, have reformed how transactions are processed. This shift introduced the concept of \emph{proposer-builder separation} (PBS), where \emph{builders} are responsible for gathering transactions from the network, assembling them into blocks, and then bidding these blocks to the proposer, who holds the authority to publish them on the blockchain~\cite{gupta2023centralizingeffectsprivateorder}. Builders compete in an auction, known as the \emph{PBS auction}, for the opportunity to get their block included in the current slot~\cite{frontierBuilderDominance},~\cite{flashbotsSearchingPostMerge}. Proposers are heavily invested in the staking process and prioritize earning through staking rewards rather than generating profits through block building and MEV extraction. \emph{Searchers} are individual or institutional entities that actively seek to capitalize on profitable opportunities arising from various factors such as market inefficiencies during periods of high volatility~\cite{berg2022empirical}, the price of execution driven by supply-demand dynamics, and delays in transaction processing~\cite{thecryptocortexUnderstandingMarket}. They use techniques like front-running, back-running, sandwiching, etc~\cite{chi2024remeasuring},~\cite{wang2022impact} to extract MEV. Occasionally, some participants take on dual roles as both builders and searchers to maximize their profits~\cite{gupta2023centralizingeffectsprivateorder}.
 
The current paper focuses on searchers and \emph{transaction creators}. Typically, searchers find MEV extraction opportunities by analyzing unconfirmed transactions in the public mempool. Searchers strategically create a bundle of transactions to earn profit and bid (proportional to their potential profit) for block space with block builders. Unconfirmed transactions in a public mempool are accessible to all the blockchain participants, resulting in high competition among searchers
~\cite{flashbotsOrderFlow}. Hence, searchers look to include private transactions to increase their chances of winning~\cite{gupta2023centralizingeffectsprivateorder}. Private transactions, also called private order flows, are transactions that are not available in the public mempool. 

Such MEV-rich private transactions are made available for sale by \emph{order flow providers} (OFPs), which typically include wallet service providers. OFPs collect intents from transaction creators, process them into transactions, and auction them off to searchers in what is known as \emph{order flow auctions} (OFAs)~\cite{structuralAdvantages}. While this transaction pipeline is profitable for searchers, there are two crucial points: (i) it might result in worse execution prices for transaction creators~\cite{luganodesLuganodesWhat}, and (ii) some OFPs charge transaction creators for providing their wallet services and may aim to enhance the experience for transaction creators by sharing the revenue generated from the OFAs with them.
This has led to the mechanism of \emph{Matchmaking}.

Matchmaking is a mechanism to redistribute the revenue generated by the order flow auctions to the transaction creators in exchange for the value that can be extracted from their transactions~\cite{flashbotsSearchingPostMerge}. A \emph{matchmaker} in blockchain is an OFP with access to transaction creators' private transactions. If the transactions are directly sent to the builders enforcing no MEV extraction, the potential MEV that could have been generated in the system from the transactions is lost. So, the matchmaker instead auctions off the transactions to searchers for efficient extraction of MEV. In return, the matchmaker compensates the transaction creators for the value their transactions create~\cite{match},~\cite{share}. Figure~\ref{fig::blocknative} shows that as of 2024, private transactions made up more than 30\% of all smart contract transactions on Ethereum (which was nearly 3 out of every 10 transactions). Though multiple authors have raised a need for such redistribution, as claimed in ~\cite{flashbotsMEVShareProgrammably},~\cite{flashbotsFRP30Quantifying}, designing a matchmaking mechanism is an open problem. This paper addresses how a matchmaker should redistribute the revenue generated through an OFA among transaction creators in a \emph{fair} manner. Some transactions add more value to the system or receive worse execution prices than others; thus, sharing the revenue equally is unfair. It should be shared proportional to how much value they add to the system. Hence, the Shapley values of an appropriately defined game become a natural way of redistributing revenue back to transaction creators.

In this paper, we define a cooperative game \mevgame\ over transaction creators based on the revenue generated in the system. We study \mevgame\ for two important types of valuations of the searchers-: (i) \emph{additive} and (ii) \emph{single-minded}. Additive value settings mean the total value a searcher derives from the transactions shared with it, is the addition of value for individual transactions. 

Searchers may be interested in a bundle(s) of transactions, which is the more practical situation in the MEV world. We abstract it through a case of single-minded searchers. Here, each searcher is interested in a particular bundle of transactions. A single-minded searcher prefers not to be allocated any transactions than to be allocated a strict subset of his interested bundle of transactions. It does not mind being allocated a superset of its interested bundle. In general, computing the Shapley value of a game is EXP~\cite{deng1994complexity}. However, the researchers have shown that for certain games, it could be computed efficiently, e.g.,~\cite{van2023efficiently},~\cite{michalak2013efficient}. We provide efficient computation even for \mevgame\ for additive value settings. But, for single-minded cases, it is difficult to compute Shapley value efficiently. We provide examples where computing the Shapley value of \mevgame\ is possibly subexponential in terms of the number of transactions and exponential in terms of the number of searchers. Hence, we conjecture that the Shapley value of \mevgame\ with single-minded searchers is SUBEXP \footnote{Of the two commonly used definitions of SUBEXP, we use the following: SUBEXP(n) = $2^{o(n)}$}. Since an analytical solution to the \mevgame\ is infeasible to compute for a large number of transactions or searchers, we resort to approximating the Shapley values. We design a polynomial time algorithm called Randomized ShapleY Procedure (\ouralgo) to compute the Shapley value of all transaction creators. $\ouralgo$ computes the marginal contribution of a transaction for randomly selected $\mathcal{O}(n^2)$ subsets of transactions than all possible subsets and approximate Shapley value as an appropriately weighted combination of these marginal contributions. We experimentally validate \ouralgo's computing efficiency and approximation. In summary, the following are our contributions. 


\noindent\textbf{Contributions.} (i) We define a cooperative game, \mevgame, over transaction creators based on the revenue generated, (ii) we prove that the Shapley value of transaction creators in the \mevgame\ is polynomial-time computable when the searcher valuations are additive, (iii) we motivate that computing Shapley value in the \mevgame\ when the searchers are single-minded bidders is possibly SUBEXP, (iv) we propose a randomized algorithm -- Randomized ShapleY Procedure (\ouralgo) that closely estimates the exact Shapley value of transaction creators, (v) we empirically show the efficacy of our algorithm by comparing its outputs with the brute-force approach.

We believe our results provide valuable insights for the practical deployment of matchmaking in the future. We leave it for future work to examine what guarantees one can give around fair redistribution among transaction creators.

\begin{figure}
    \centering
    \includegraphics[width=\linewidth]{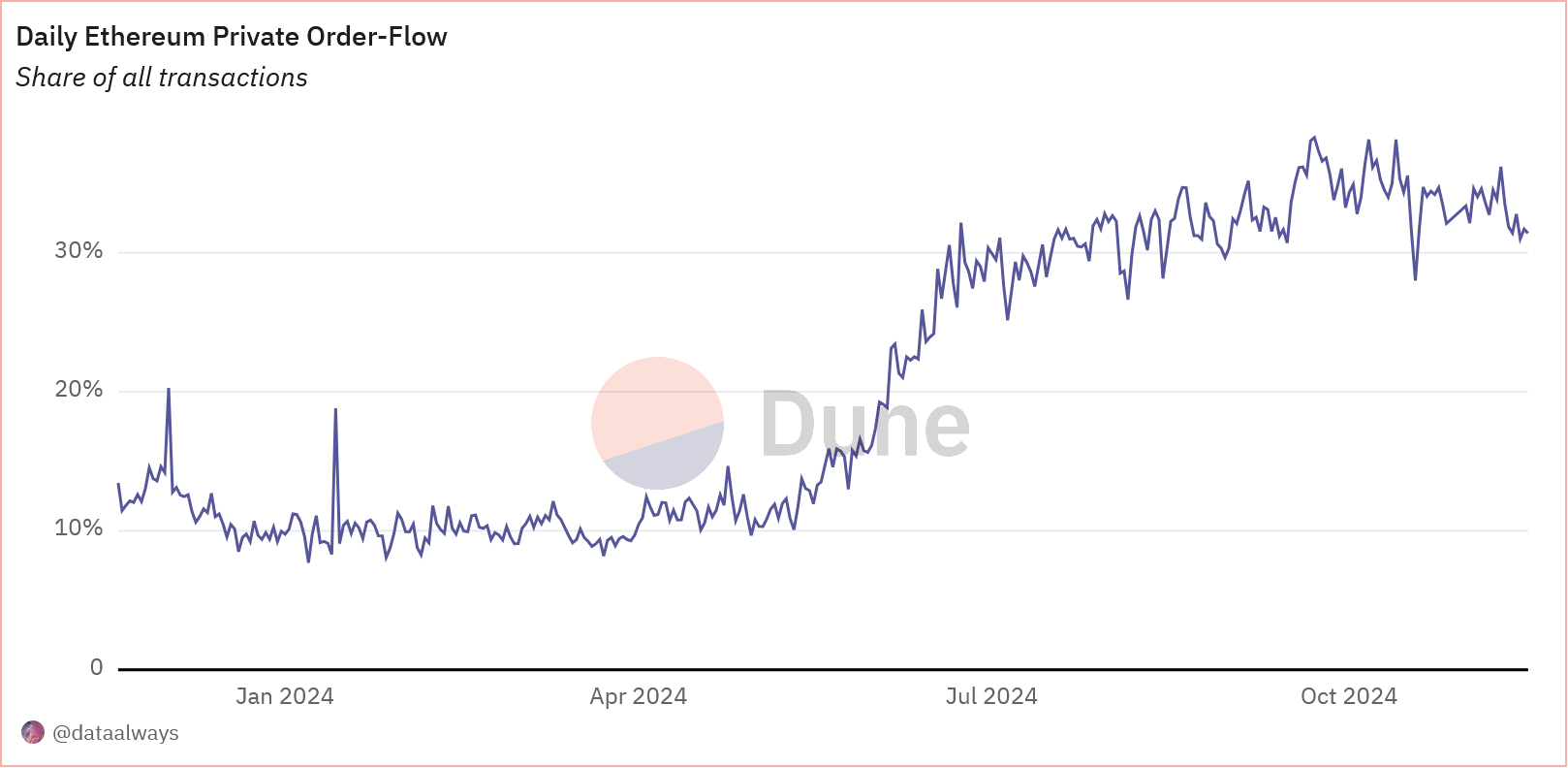}
    \caption{Private Transactions on Ethereum \protect\cite{dunePOF}}
    \label{fig::blocknative}
\end{figure}

\section{Related Work}
\label{sec:related}

\textbf{MEV Auctions.}~\cite{10634354} study various strategic interactions and auction setups of block builders with proposers. They evaluate how access to MEV opportunities and improved relay connectivity impact bidding performance.~\cite{10271857} propose an Ethereum gas auction model using the First Price Sealed-Bid Auction (FPSBA) between different bots and miners.

\noindent\textbf{MEV redistributions.}\cite{chionas2023gets} model the MEV setting as a dynamical system with a fraction of MEV going to the miner as a dynamical variable updated with every time step. The miners and builders are assumed to be one entity, with the rest of the MEV returning to transaction creators. ~\cite{mazorra2023towards} discuss rebates in the context of liquidity providers in constant function market makers and discuss the auction between searcher and builders with the assumption of perfect MEV oracle that can compute the MEV extracted given the state of the blockchain and a new block of transactions.~\cite{chitra2022improving} proposes MEV redistribution as a dynamical system in which lending and staking portfolios of block proposer are chosen as a parameter that determines how much of the MEV extracted in a block is redistributed to staking.

\noindent\textbf{Game Theory and Blockchains} Researchers explored various game theoretic concepts in blockchains. E.g., the authors of \cite{roughgarden, damle2024designing,damle2024no}  use concepts from mechanism design to design transaction fee mechanisms and fairness. \cite{jain2021we,chen2024game} study scalability issues in blockchains through game theory.\cite{siddiqui2020bitcoinf} discusses on achieving fairness for Bitcoin in a transaction fee-only model. \cite{jain2022tiramisu} studies the equilibrium behavior of the miners. In this work, we explore the use of cooperative game theory in matchmaking. \cite{damle2021fasten} proposes a fair and secure distributed voting system that utilizes smart contracts to ensure that votes remain anonymous and are not tampered with during the voting process. \cite{faltings2021orthos} explores an AI framework designed for trustworthy data acquisition to enhance the reliability of data collection processes in multi-agent systems. Furthermore, \cite{srivastava2024decent} proposes a mechanism that promotes decentralization through block reward systems, aiming to enhance fairness and efficiency in blockchain networks.

\section{Preliminaries}
\label{sec:prelims}

\subsection{Matchmaking} 
\emph{Matchmaking} executes in two steps: (i) the allocation of transactions to searchers and (ii) the redistribution of revenue generated through OFA. Consider a set of transactions $\mathcal{T} = \{t_1,t_2,\dots t_n\}$ that can generate MEV, and a set of searchers $\mathcal{S} = \{s_1,s_2,\dots s_m\}$, and a matchmaker $M$. 

Matchmaking aims to assign these transactions to searchers to maximize their welfare. Each searcher $s_i$ has valuation $v_{s_i}:2^{\mathcal{T}} \rightarrow \mathbb{R}_+ $. $M$ an auction among the searchers. Each searcher makes appropriate bids. Each transaction is allocated to one searcher based on the searcher's bids. Then, the matchmaker collects payments $p_{s_i}$ from each searcher $s_i$. The utility gained by searchers is due to the private transactions. Thus, transaction creators also play a significant role in maintaining the MEV ecosystem. In order to compensate them, the matchmaker shares the revenue generated by the auction, $\mathcal{R}=\sum_{i}{p_{s_i}}$, among the transaction creators.

\begin{definition}
[Matchmaking] We define matchmaking as mechanism $\mathcal{M}$ which takes, $(\mathcal{T}, \mathcal{S}, (v_{s_i})_{s_i \in \mathcal{S}})$ as inputs, (conducts auction amongst $\mathcal{S}$ for $\mathcal{T}$,) and outputs $(A,\Gamma)$ where $A$ represents allocation of $\mathcal{T}$ and $\Gamma$ represents the revenue distribution of $\mathcal{R}$. Reward to $t_j$ is given by $r_{t_j} =\Gamma_{t_j} \mathcal{R}$ and $\sum_{t_j\in\mathcal{T}} \Gamma_{t_j}=1$. 
\label{def:matchmaking}
\end{definition}


Before discussing matchmaking mechanisms, we first discuss the possible valuation structures and corresponding type of auction and its payments in the following sections.

\subsection{Searcher-Matchmaker Auctions}
The searcher-matchmaker auction is combinatorial in nature, with searchers competing for a subset of transactions. Technically, the valuations for different transactions and bundles may have complex relationships. However, commonly it is seen in two forms: one where the searcher values each transaction separately and another where the searchers bid for a bundle of transactions. The former can be seen as searchers with \emph{additive valuations} and the latter as searchers with \emph{single-minded valuations}. 

\noindent\textbf{Additive Valuations.} 
Searchers with additive valuations compete for the subset of transactions for which they have positive valuations. 
The valuation of any subset of items for a searcher is simply the sum of the individual valuations for the transactions in that subset, $v_{s_i}(B) = \sum_{t_j \in B}v_{s_i}(t_j)$. This model assumes that searchers value transactions independently of each other.  Each searcher $s_i \in \mathcal{S}$ submits the $n$-tuple $b_{s_i}$ 

\begin{definition}[Additive Valuations]
    A valuation \( v \) is called \textit{additive} if \( v(B) = \sum_{i \in B} v(i) \), where \( v(i) \) is the value of transaction \( i \) and \( B \) is any subset of transactions.
\end{definition}

The optimal allocation in such auctions is to give each transaction to the searcher that values it the most. The optimal allocation in such auctions—where searchers have additive valuations and bid for individual transactions—can indeed be implemented using second-price auctions with Vickrey-Clarke-Groves (VCG) payments individually for each transaction.  VCG payments ensure desirable properties such as incentive compatibility, individual rationality, and optimal social welfare~\cite{narahari2014game}. 

\textbf{Single-Minded Valuations.}
To extract MEV, searchers often require the transactions in their bundles to be executed atomically, and hence, they only value a bundle if it is received in its entirety. Such searchers are referred to as single-minded searchers. A single-minded searcher values only one specific set or bundle of items, say $B^*$ and has no interest in any other proper subset of $B^*$.

\begin{definition}[Single-minded Valuations]
    A valuation $v$ is called \textit{single-minded} if there exists a bundle of transactions \( B \) and a value $\bar{v} \in \mathbb{R}^+$ such that \( v(B') = \bar{v} \) for all \( B' \supseteq B \), and \( v(B') = 0 \) for all other \( B' \). 
\end{definition}

Finding the optimal allocation in this setting requires finding bundles that maximize social welfare while ensuring the bundles in optimal allocation are mutually disjoint. Both finding the optimal allocation and finding an optimal allocation that is better than $m^\frac{1}{2} - \epsilon$ is NP-hard ~\cite{blumrosen2007algorithmic}. Thus, such auctions are typically implemented using approximately social welfare auction mechanisms. In this paper, we use ~\cite{blumrosen2007algorithmic}'s ICA-SM, which greedily computes the best allocation and is $\sqrt{m}$-approximate optimal. ICA-SM ensures incentive compatibility and individual rationality and produces a $\sqrt{m}$-approximation of optimal social welfare. 

\begin{algorithm}
\caption{ICA-SM \protect\cite{blumrosen2007algorithmic}}
\label{algo:icasm}
\begin{algorithmic}[1]
\STATE $\mathbf{  Input:}$ Bids $\{(B_{s_i}, b_{s_i})\}_{i=1}^n$
\STATE R $\leftarrow$ \text{Sorted indices of searchers based on $b_{s_i} / \sqrt{\lvert B_{s_i}\rvert}$} 
\STATE $W \leftarrow \emptyset$.
\FOR{$k \in R$}
    \IF{$B_{s_k} \cap (\bigcup_{s_j \in W} B_{s_j}) = \emptyset$}
        \STATE $W \leftarrow W \cup \{B_{s_k}\}$.
    \ENDIF
\ENDFOR
\STATE $\mathbf{Payments:}$ For each $s_i \in W$, $p_{s_i} = b_{s_j} / \sqrt{\lvert B_{s_j}\rvert / \lvert B_{s_i}\rvert }$, where $j \in \{i+1, \ldots, n\}$ is the smallest index such that $B_{s_i} \cap B_{s_j} \neq \emptyset$, and for all $l < j, l \neq i$, $B_{s_l} \cap B_{s_j} = \emptyset$ (if no such $s_j$ exists then $p_{s_i} = 0$).
\STATE $\mathbf{  Output:}$ Set of Winners $W$

\end{algorithmic}
\end{algorithm}

\subsection{Towards Matchmaking Mechanism}
We address the matchmaking problem as follows.
The value generated by each transaction need not be the same for all transaction creators. Thus, it would be equitable for all transaction creators to redistribute the generated revenue proportional to their respective contribution to the system. This naturally leads us to the solution concept of \emph{Shapley} value. Towards this, we define a game known as \mevgame, and redistribute the revenue generated by the matchmaker according to the Shapley value of each transaction in this game.

\subsection{Characteristic Form Games and Shapley Value}
\label{ssec:cgt}

Cooperative game theory analyzes scenarios where players, or agents, can form coalitions to achieve collective goals. A \emph{characteristic form game} $(N,\nu)$ is (i) a set of players, and (ii) $\nu: 2^N \rightarrow \mathbb{R}$, $\nu(S)$ represents the value $S\subset N$ can generate by forming a coalition. For a cooperative game, many solutions are proposed to redistribute value collectively among participants (or players). 

\noindent\textbf{Shapley Value}
Shapley value redistributes a cooperative game's total value or payoff to individual players based on their marginal contribution to every possible coalition. 
It is the only solution concept that satisfies all desirable properties of fair redistribution, such as efficiency, symmetry, and additivity~\cite{shapley1953value}. $\varphi_j(\nu)$ of  player $j$ playing in  characteristic form game $(N, \nu)$ is given as follows:

\begin{equation}
    {\varphi_{j}(\nu)} = \sum_{S\subseteq\mathcal{N}\setminus j} { \frac{\lvert S\rvert !(\lvert\mathrm{N}\rvert - \lvert S\rvert-1)!}{\lvert\mathrm{N}\rvert!}  \big(\nu(S \cup j)-\nu(S)\big)}\label{eq:shap-val-form-1}
\end{equation}
%

Alternatively, Shapley value can be computed via permutations of players \( \pi \in \Pi \), where $\Pi$ is the set of all permutations of players in which players could join the coalition. $\pi(j)$ represents the set of players that precede \( j \) in the permutation \( \pi \). The expression \( \nu(\pi(j) \cup \{j\}) - \nu(\pi(j)) \) is the marginal contribution of player \( j \) to the coalition $\pi(j)$.

\begin{equation}
\label{eq:shapley-form-2}
\varphi_j(\nu) = \frac{1}{\lvert N \rvert!} \sum_{\pi \in \Pi} \left[\nu(\pi(j) \cup \{j\}) - \nu(\pi(j))\right]    
\end{equation}






\section{Our Approach}
\label{ssec:model}
\subsection{\mevgame}

\mevgame\ is a cooperative game ($\mathcal{T},\nu$) with $\mathcal{T}$, the transaction creators being the players. $\nu(T)$ where $T\subseteq \mathcal{T}$ is the value of transactions in $T$. We define it as the revenue collected by the matchmaker if only transactions $T$ had been available i.e., the scenario when only searchers whose desired bundle $\tau \subseteq T$ are present. For each $t_j$, its marginal contribution to each $T \subseteq \mathcal{T} \setminus t_j$, requires finding the revenue with $T \cup t_j$ and $T$. $\Gamma^{SHAP}_{t_j}$ is computed as $\frac{\varphi_{t_j}}{\sum_{i \in [n]} \varphi_{t_i}}$. 

For example, consider RST-Game with 3 searchers and 4 transactions, with corresponding (bundle, bid) pairs:  $(\{1,2\},10), (\{3,4\},9), (\{2,4\},8)$. In this case, the winners are $s_1, s_2$ and each of them pays is $(8/\sqrt(2))*\sqrt{2} = 8$ and the revenue is 16. $\Gamma_{t_1}^{SHAP} = \Gamma_{t_3}^{SHAP} = 0.154$, $\Gamma_{t_2}^{SHAP} = \Gamma_{t_4}^{SHAP} = 0.346$.

\subsection{Shapley Value of \mevgame\ with Additive Valuation Searchers}


Searchers bid for the subset of transactions and valuation for each transaction. Each searcher $s_i \in \mathcal{S}$ submits bid
$b_{s_i}$, where $b_{s_i} \in \mathbb{R}^n$ is an $n$-tuple where $b_{s_i}[j]$ is $s_i$ searcher's bid for transaction $t_j$. 

$M$ reduces this auction to $n$ independent second-price auctions with VCG payments. For each transaction, $t_j \in \mathcal{T}$, $M$ determines the searcher with the highest bid for $t_j$ as the winner. The winner pays the amount of the second-highest bid. The algorithm for the winner determination in this case can be found in appendix.

The Shapley value of each winning transaction $t_j \in \mathcal{T}^{*}$ is computed using the permutation method. Let $\Pi$ denote all the possible permutations of winning transaction set $\mathcal{T}^{*}$. For each permutation $\pi \in \Pi$, $\pi(t_j)$ denotes all the transactions present before $t_j$ in $\pi$. We compute Shapley value of $t_j$ as the average of the marginal contributions of $t_j$ for each permutation $\pi$ given by $\nu(\pi(t_j) \cup \{t_j\}) - \nu(\pi(t_j))$, where $\nu(\pi(t_j)) = \sum_{t_k \in \pi(t_j)} b_{s_\text{second}}[t_k]$. 


$$
\varphi_{t_j}(\nu) = \frac{1}{k!}\sum_{\pi \in \Pi} \nu(\pi(t_j) \cup \{t_j\}) - \nu(\pi(t_j))
$$

\begin{theorem}
\label{thm:easy}
 The Shapley value of \mevgame\ ($\mathcal{T},\nu$)) can be computed in polynomial time if $\nu$ is additive.
\end{theorem}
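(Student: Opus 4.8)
The plan is to exploit the \emph{additivity axiom} of the Shapley value together with the additive (indeed, separable) structure of $\nu$ in this setting. The key observation is that when searcher valuations are additive, the whole combinatorial auction decouples into $n$ independent single-item second-price auctions, one per transaction. Consequently the characteristic function $\nu$ itself decomposes as a sum of per-transaction contributions: for any coalition $T \subseteq \mathcal{T}$, the revenue is $\nu(T) = \sum_{t_k \in T} c_k$, where $c_k$ is the second-highest bid for transaction $t_k$ (the VCG payment collected on item $t_k$ in isolation). The reason this holds is that whether transaction $t_k$ appears in a coalition neither affects, nor is affected by, the winner or payment of any other transaction $t_\ell$, since bids are separable across transactions. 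Thus each $c_k$ is a constant that does not depend on which other transactions are present.

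First I would establish this decomposition rigorously, i.e.\ show $\nu(T) = \sum_{t_k \in T} c_k$ for all $T$, where $c_k$ is the second-highest value of $\{b_{s_i}[k] : s_i \in \mathcal{S}\}$. This is the content that makes $\nu$ not merely additive in the axiomatic sense but literally a \emph{sum of singleton-games}: define for each $k$ the game $\nu_k(T) = c_k \cdot \mathbf{1}[t_k \in T]$, so that $\nu = \sum_{k=1}^{n} \nu_k$. Second, I would compute the Shapley value of each $\nu_k$ directly. The game $\nu_k$ is a scaled unanimity/dictator game on the single player $t_k$: player $t_k$ is the only player whose presence ever changes the value, and every other player is a \emph{null player}. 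Hence by the null-player and efficiency properties, $\varphi_{t_j}(\nu_k) = c_k$ if $j = k$ and $0$ otherwise. One can verify this straight from Equation~\eqref{eq:shap-val-form-1}: for $j \ne k$ the marginal contribution $\nu_k(S \cup j) - \nu_k(S)$ is identically zero, while for $j = k$ it telescopes to $c_k$ because the coefficients $\frac{|S|!(|N|-|S|-1)!}{|N|!}$ sum to one.

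Third, I would invoke the \emph{additivity axiom} of the Shapley value (stated in the preliminaries as one of its defining properties): since $\nu = \sum_{k=1}^{n} \nu_k$, we get $\varphi_{t_j}(\nu) = \sum_{k=1}^{n} \varphi_{t_j}(\nu_k) = c_j$. Therefore the Shapley value of each transaction is simply its own second-price VCG payment, a quantity computable in $\mathcal{O}(m)$ time by scanning the $m$ searchers' bids for that transaction, or $\mathcal{O}(nm)$ total across all transactions. This is manifestly polynomial, which establishes the theorem. The computation then just normalizes via $\Gamma^{\shap}_{t_j} = \varphi_{t_j} / \sum_i \varphi_{t_i}$ as defined earlier.

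The main obstacle---really the only nontrivial step---is proving the decomposition $\nu(T) = \sum_{t_k \in T} c_k$ with the correct interpretation of $c_k$, and in particular being careful about how the characteristic function was defined in terms of which searchers are ``present.'' In the general \mevgame, $\nu(T)$ restricts to searchers whose desired bundle lies inside $T$; in the additive reduction each searcher effectively places an independent bid on every singleton, so the notion of ``searcher interested in $T$'' collapses and every searcher's bid on $t_k$ always counts whenever $t_k \in T$. I would need to confirm that under the additive-valuation auction model the payment on item $t_k$ is genuinely the global second-highest bid for $t_k$, independent of $T$, so that $c_k$ is a well-defined constant and the cross-terms vanish. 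Once that separability is pinned down, the additivity axiom does all the remaining work and the polynomial-time claim is immediate.
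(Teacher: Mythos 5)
Your proposal is correct and rests on the same core fact as the paper's proof: the marginal contribution of $t_j$ to any coalition is the constant second-highest bid $b_{s_\text{second}}[t_j]$, so $\varphi_{t_j}(\nu)$ equals that bid and is computable in $\mathcal{O}(nm)$ time. The paper reaches this by noting the marginal contribution is identical across all permutations, while you package the same observation as a decomposition into singleton games plus the additivity axiom --- a cosmetic difference, not a different route.
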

\begin{proof}
    As the searchers' valuations are additive, the marginal contribution of $t_j$ in each $k!$ permutation is exactly the payment of the highest bidder $b_{s_\text{second}}[t_j]$. Hence, the Shapley value of $t_j$ is $b_{s_\text{second}}[t_j]$ and redistribution fraction is $\Gamma^{\shap\ }_{t_j} = \frac{b_{s_\text{second}}[t_j]}{\sum_{{t_j} \in T^{*}} b_{s_\text{second}}[t_j]}$. 
\end{proof}





\subsection{Shapley Value of \mevgame\ for Single-Minded Searchers}
\label{ssec:shap-val-single-minded}
  Each searcher $s_i \in \mathcal{S}$ submits only a single subset $B_{s_i} \subseteq \mathcal{T}$ in bid $\{ B_{s_i}, b_{s_i} \}$, where $s_i$'s valuation $v_{s_i}$ is single-minded. $M$ reduces this auction to a combinatorial auction with single-minded bidders.

Let $\mathcal{T}^{*} \subseteq \mathcal{T}$ denote the winning transactions in the auction.
In most instances of the \mevgame\ generated at random, the number of unique marginal contributions seems polynomial. However, we show in Theorem \ref{thm:hard} that the number of unique values of the marginal contribution in \mevgame\ can be subexponential in certain instances due to the underlying (bundle, bid) structure.

\begin{theorem}
    \label{thm:hard}
    The number of unique marginal contributions in the computation of the Shapley value of transaction creators in \mevgame\  can be $\Omega(2^{\sqrt{n}})$.
\end{theorem}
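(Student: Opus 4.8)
The plan is to prove the bound by an explicit adversarial construction rather than a generic counting argument: I will build a single instance of \mevgame\ on $n$ transactions in which one distinguished transaction $t^*$ already realizes $2^{\Theta(\sqrt n)}$ pairwise-distinct marginal contributions $\nu(T\cup\{t^*\})-\nu(T)$. The architecture has three parts: (i) $k=\Theta(\sqrt n)$ independent ``toggle gadgets'' that let me switch on or off $k$ prescribed, $\mathbb{Q}$-linearly-independent amounts of revenue; (ii) a top-priority ``eraser'' searcher $s^*$ whose desired bundle contains $t^*$ and meets every gadget, so that activating $s^*$ (by supplying $t^*$) wins over and blocks all gadgets at once; and (iii) a price-setter that pins $s^*$'s ICA-SM payment to a constant. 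Then $\nu(T\cup\{t^*\})$ is constant while $\nu(T)$ sweeps $2^k$ distinct values, so the marginal contribution sweeps $2^k$ distinct values.

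First I would build gadget $\ell$ from two nested single-minded searchers. Pick distinct square-free integers $m_1,\dots,m_k$ and give gadget $\ell$ a ``base'' searcher $P_\ell$ desiring a bundle $B_{P_\ell}$ of size $m_\ell$ and a ``capping'' searcher $Q_\ell$ desiring $B_{P_\ell}\cup\{c_\ell\}$, with bids chosen so that $P_\ell$ outranks $Q_\ell$ under the $b/\sqrt{|B|}$ ordering and so that $Q_\ell$ is exactly the critical (price-setting) bidder for $P_\ell$. By the payment line of Algorithm~\ref{algo:icasm}, the gadget's contribution to the revenue is $b_{Q_\ell}\sqrt{|B_{P_\ell}|/|B_{Q_\ell}|}$; choosing $b_{Q_\ell}=\sqrt{m_\ell+1}$ makes this contribution exactly $\sqrt{m_\ell}$ whenever both searchers are present, and $0$ when $Q_\ell$ is absent (then $P_\ell$ wins unopposed and pays $0$). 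Making all gadget bundles pairwise disjoint guarantees that the gadgets never interfere, so for any coalition $T$ the gadget revenue is additive, equal to $\sum_{\ell:\,c_\ell\in T}\sqrt{m_\ell}$; the toggle is simply whether $c_\ell\in T$.

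Next I would wire in the interaction that forces a single transaction to see all toggles at once. Let $s^*$ desire $B^*=\{t^*,d\}\cup\{a_\ell:\ell\le k\}$, where $a_\ell$ is one fixed element of $B_{P_\ell}$ and $d$ is a fresh transaction, and give $s^*$ the globally largest $b/\sqrt{|B|}$ score so it is chosen first by ICA-SM whenever it is active. When $t^*$ is added (and the $B_{P_\ell}$ and $d$ are kept in $T$), $s^*$ becomes active, wins $B^*$, and---because $B^*$ meets every $P_\ell$ and $Q_\ell$ on $a_\ell$ and meets the price-setter on $d$---blocks the entire rest of the instance. A dedicated searcher $s^{**}$ desiring only $\{d\}$, ranked just below $s^*$ and always active, is then the unique critical bidder for $s^*$, pinning $p_{s^*}$ to a fixed constant independent of $T$. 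Hence $\nu(T\cup\{t^*\})=p_{s^*}$ is constant, whereas $\nu(T)=\sum_{\ell:\,c_\ell\in T}\sqrt{m_\ell}$, so the marginal contribution of $t^*$ equals $p_{s^*}-\sum_{\ell:\,c_\ell\in T}\sqrt{m_\ell}$.

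Finally I would establish distinctness and count. Since $\{\sqrt{m_\ell}\}$ are square roots of distinct square-free integers, they are linearly independent over $\mathbb{Q}$; consequently the $2^k$ subset sums $\sum_{\ell\in A}\sqrt{m_\ell}$ over $A\subseteq[k]$ are pairwise distinct, and as $T$ ranges over the $2^k$ choices of which $c_\ell$ it contains, the marginal contribution of $t^*$ takes $2^k$ distinct values. For the transaction budget, each gadget uses $m_\ell+1=O(k)$ transactions and there are $k$ of them, so the instance uses $n=\Theta(k^2)$ transactions plus the $O(1)$ special ones; thus $k=\Theta(\sqrt n)$ and the number of distinct marginal contributions is $2^{\Theta(\sqrt n)}=\Omega(2^{\sqrt n})$. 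I expect the main obstacle to be the rigorous verification that ICA-SM behaves exactly as intended in \emph{both} coalitions $T$ and $T\cup\{t^*\}$: one must check the global sort order ($s^*$ above $s^{**}$ above the gadget searchers, and $P_\ell$ above $Q_\ell$), confirm that the greedy winner loop selects precisely $s^*$ in one case and the independent $P_\ell$'s in the other, and---most delicately---verify that the critical-bidder clause ``unblocked by earlier winners'' in the payment line selects $Q_\ell$ for $P_\ell$ and $s^{**}$ for $s^*$ in every case. Handling these edge cases of the payment rule, rather than the high-level idea, is where the real work lies.
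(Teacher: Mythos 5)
Your construction is correct in outline, but it takes a genuinely different route from the paper's. The paper builds one tightly interlocking instance: transactions arranged in a $\sqrt{n}\times\sqrt{n}$ matrix, with $\mathcal{B}_0$ the rows and $\mathcal{B}_1$ a set of transversals, all bundles of equal size $\sqrt{n}$ and bids taken to be interleaved powers of $3$; for each of the $2^{\sqrt{n}}-1$ selections of rows it identifies a coalition and a transaction $t_u$ whose removal flips the ICA-SM allocation from ``all selected row-bundles win'' to ``the corresponding transversal bundles win,'' so the marginal contribution is a $\{-1,0,1\}$-signed sum of bids, and uniqueness follows from an elementary base-$3$ induction. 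You instead use $k=\Theta(\sqrt{n})$ mutually disjoint two-searcher toggle gadgets whose payments add, get uniqueness of the $2^k$ subset sums from the $\mathbb{Q}$-linear independence of $\sqrt{m_\ell}$ for distinct square-free $m_\ell$, and add an ``eraser'' searcher so that $\nu(T\cup\{t^*\})$ is pinned to a constant. What your version buys: all $2^k$ distinct marginal contributions are realized by a \emph{single} transaction $t^*$ (the paper spreads them over different $t_u$'s), and the disjointness of the gadgets makes the ICA-SM verification local rather than global. What it costs: you need unequal bundle sizes and irrational bids, and you lean on a number-theoretic independence fact where the paper's powers-of-$3$ argument is self-contained; also, since the literal claim is $\Omega(2^{\sqrt{n}})$ with constant $1$ in the exponent, you should note that the sum of the first $k$ square-free integers is roughly $\tfrac{\pi^2}{12}k^2<k^2$, so indeed $k\ge\sqrt{n}$ and the bound goes through. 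The one piece you have correctly flagged as outstanding --- checking the sort order and the ``unblocked critical bidder'' clause of the payment rule in both coalitions --- does work out for your gadgets (each $B_{Q_\ell}$ meets no higher-ranked bundle other than $B_{P_\ell}$ within the coalition $T$, and $s^{**}$ meets only $B^*$), so there is no gap in substance, only in write-up.
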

\begin{proof}
Proof by construction:

Consider an instance of the \mevgame\ with set of transactions $\mathcal{T} = \{t_1, \ldots t_n\}$. Let the set of searchers $\mathcal{S} =\{s_1, \ldots s_m\}$  be divided into two classes of searchers with equal cardinality, $\mathcal{S}_0$ and $\mathcal{S}_1$ with corresponding set of bundles $\mathcal{B}_0$ and $\mathcal{B}_1$ satisfying the following $2$ properties:
\begin{itemize}
    \item None of the bundles in $\mathcal{B}_0$ or $\mathcal{B}_1$ intersects with any other bundle in $\mathcal{B}_0$ or $\mathcal{B}_1$, respectively, i.e., $\forall B_{s_i}, B_{s_j} \in \mathcal{B}_0, B_{s_i} \cap B_{s_j} = \phi$. In other words, bundles within $\mathcal{B}_0$ are mutually exclusive and bundles within $\mathcal{B}_1$ are mutually exclusive.
    \item Each bundle in $\mathcal{B}_0$ intersects with every bundle in $\mathcal{B}_1$ and vice versa, i.e., $\forall B_{s_i} \in \mathcal{B}_0,\forall B_{s_j} \in \mathcal{B}_1, B_{s_i} \cap B_{s_j} \neq \phi$.
\end{itemize}

Here is a construction satisfying the above properties:\\
Let $m = 2\sqrt{n}$. Thus, $|\mathcal{B}_0| = |\mathcal{B}_1| = m/2 = \sqrt{n}$. Let the size of each bundle be $\sqrt{n}$, i.e., $|B_{s_i}| = \sqrt{n}, \forall s_i \in \mathcal{S}$. Consider the arrangement of transactions as matrix $E$ shown below:

\[
E = \begin{bmatrix}
t_1 & t_2 & t_3 & \cdots & t_{\sqrt{n}} \\
t_{\sqrt{n}+1} & t_{\sqrt{n}+2} & t_{\sqrt{n}+3} & \cdots & t_{2\sqrt{n}} \\
t_{2\sqrt{n}+1} & t_{2\sqrt{n}+2} & t_{2\sqrt{n}+3} & \cdots & t_{3\sqrt{n}} \\
\vdots & \vdots & \vdots & \ddots & \vdots \\
t_{n - \sqrt{n} + 1} & t_{n - \sqrt{n} + 2} & t_{n - \sqrt{n} + 3} & \cdots & t_n
\end{bmatrix}
\]





\[
\mathcal{B}_0 = \left\{ 
\begin{array}{c}
\{t_1, t_2, \ldots, t_{\sqrt{n}}\}, \\
\{t_{\sqrt{n}+1}, t_{\sqrt{n}+2}, \ldots, t_{2\sqrt{n}}\}, \\
\ldots, \\
\{t_{n-\sqrt{n}+1}, t_{n-\sqrt{n}+2}, \ldots, t_n\} 
\end{array}
\right\}
\]

\[
\mathcal{B}_1 = \left\{ 
\begin{array}{c}
\{t_1, t_{\sqrt{n}+2}, t_{2\sqrt{n}+3} \ldots, t_{n}\}, \\
\{t_{\sqrt{n}+1}, t_{2\sqrt{n}+2}, \ldots, t_{n-1}, t_{\sqrt{n}}\}, \\
\ldots, \\
\{t_{n-\sqrt{n}+1}, t_{2}, t_{\sqrt{n}+3} \ldots, t_{n-\sqrt{n}}\} 
\end{array}
\right\}
\]


Bundles in $\mathcal{B}_0$ are essentially individual rows from the matrix $E$. Bundles in $\mathcal{B}_1$ contain one transaction from each row and each column of matrix $E$ \footnote{Note that $(\sqrt{n})!$ such selections of $\mathcal{B}_1$ are possible, given the same $\mathcal{B}_0$.
}. It can be shown that none of the bundles in $\mathcal{B}_0$ or $\mathcal{B}_1$ intersects with any other bundle in $\mathcal{B}_0$ or $\mathcal{B}_1$ respectively. It can also be shown that the intersection of a bundle from $\mathcal{B}_0$ and a bundle from $\mathcal{B}_1$ is a singleton \footnote{More specifically, $B^1_x \cap B^0_y$ is the transaction in $B^1_x \in \mathcal{B}_1$ with the row number $y$ in matrix $E$ where bundle $B^0_y \in \mathcal{B}_0$ is the bundle corresponding to row $y$ in matrix $E$.}.

Let us denote the bundles of $\mathcal{B}_0$ as $B^0_1, \ldots, B^0_{\sqrt{n}}$ and the corresponding bids as $b^0_1, b^0_2, \ldots, b^0_{\sqrt{n}}$ such that 
$b^0_1 > b^0_2 > \ldots > b^0_{\sqrt{n}}$. 
Similarly, we denote the bundles of $\mathcal{B}_1$ as $B^1_1, \ldots, B^1_{\sqrt{n}}$ and the corresponding bids as $b^1_1, b^1_2, \ldots, b^1_{\sqrt{n}}$ such that 
$b^1_1 > b^1_2 > \ldots > b^1_{\sqrt{n}}$. For ease of exposition, we also require the bid values to be $3^{(2\sqrt{n}-2i+1)}$ for $b^0_i$ and $3^{(2\sqrt{n}-2i)}$ for $b^1_i$. Thus,
\begin{multline*}    
    b^0_1 = 3^{(2\sqrt{n}-1)} > b^1_1 = 3^{(2\sqrt{n}-2)} > b^0_2 = 3^{(2\sqrt{n}-3)} > \\ \ldots > b^0_{\sqrt{n}} = 3^1 >b^1_{\sqrt{n}} = 3^0 
\end{multline*}




We now show that while computing Shapley values of all transactions using ICA-SM
for the above construction, we will encounter at least $2^{\sqrt{n}}-1$ unique marginal values. The main insight in our approach is that corresponding to any of the non-empty $2^{\sqrt{n}} - 1$ combinations of bundles from $\mathcal{B}_0$, there exists a unique transaction set $T$ and transaction $t_j$ such that $\nu(T) - \nu(T \setminus \{t_j\})$ gives a unique marginal contribution.

\emph{Transaction selection}: Let the bundles selected from $\mathcal{B}_0$ in the combination be $B^0_a, \ldots, B^0_z$. Then, we define our transaction set as $T = \bigcup_{x=a}^{x=z}B^0_x \cup B^1_x$. We note the following:
\begin{enumerate}
    \item Let $\{t_x\} = B^0_x \cap B^1_x$ and $\{t_y\} = B^0_y \cap B^1_y$. Then, if $B^0_x \neq B^0_y \implies t_x \neq t_y$.\\
    We can prove this by contradiction. Suppose
    \begin{align*}
        \exists B^0_x, B^0_y | B^0_x \neq B^0_y \land t_x = t_y \\
        \implies& t_x \in B^0_x \land t_x \in B^0_y \\
        \implies& t_x \in B^0_x \cap B^0_y \\
        \implies& B^0_x \cap B^0_y \neq \phi
    \end{align*}
    Given that all bundles in $\mathcal{B}_0$ are mutually exclusive, this is a contradiction.
    \item Let $\{t_x\} = B^0_x \cap B^1_x$. If $B^0_x$ is selected, $t_x \in T$.\\
    This is trivially true because $T = \bigcup B^0_x \cup B^1_x$ for all selected bundles $B^0_x$.
    \item Let $\{t_x\} = B^0_x \cap B^1_x$. If $B^0_x$ is not selected, $t_x \notin T$.\\
    We prove this by contradiction. Let $B^0_y$ and $B^1_y$ represent any selected bundle. Suppose $t_x \in T$
    
    \begin{align*}
        &\implies (\exists B^0_y | t_x \in B^0_y) \lor (\exists B^1_y | t_x \in B^1_y)\\
        &\implies (t_x \in B^0_x \land \exists B^0_y | t_x \in B^0_y) \lor\\
        & \hspace{25pt} (t_x \in B^1_x \land \exists B^1_y | t_x \in B^1_y)\\
        &\implies (\exists B^0_y | t_x \in B^0_y \cap B^0_x) \lor (\exists B^1_y | t_x \in B^1_y \cap B^1_x) \\
        &\implies (\exists B^0_y | B^0_y \cap B^0_x \neq \phi) \lor (\exists B^1_y | B^1_y \cap B^1_x \neq \phi)
    \end{align*}
    Given that all bundles within $\mathcal{B}_0$ and $\mathcal{B}_1$ are mutually exclusive, this is a contradiction.
    \item While aggregating which searchers are allotted their bundles given the set of transactions $T$, only the following two cases are possible:
    \begin{itemize}
        \item All allotted bundles are from $\mathcal{B}_0$ and their payments are some bids $b^1_x$, i.e. bid value of a bundle in $\mathcal{B}_1$.
        \item All allotted bundles are from $\mathcal{B}_1$ and their payments are some bids $b^0_x$, i.e. bid value of a bundle in $\mathcal{B}_0$.
    \end{itemize}
    This is true because our construction obeys the two properties mentioned above: i) mutual exclusion within $\mathcal{B}_0$ and $\mathcal{B}_1$, and, ii) non-empty intersection between any bundle from $\mathcal{B}_0$ and $\mathcal{B}_1$.
    \item The ICA-SM payment
        \footnote{As the size of all bundles in $\mathcal{B}_0$ and $\mathcal{B}_1$ is $\sqrt{n}$, ICA-SM ordering is the same as ordering of the bids -- the common factor $\sqrt{n}$ cancels out. Similarly, the payments (Step 9 in Algorithm~\ref{algo:icasm}) $p_{s_i} = b_{s_j}/\sqrt{|B_{s_j}|/|B_{s_i}|} = b_{s_j}$ as $|B_{s_j}| = |B_{s_i}| = \sqrt{n}$}
    corresponding to $B^0_x$ is $b^1_x$.\\
    Transactions of both $B^0_x$ and $B^1_x$ are present in $T$. Also, according to our assigned bid values, $\nexists b_y | b^0_x < b_y < b^1_x$, i.e., bid $b^1_x$ is the highest bid smaller than bid $b^0_x$. Given that $B^0_x$ and $B^1_x$ intersect, the searcher corresponding to bundle $B^0_x$ pays $b^1_x$.
    \item The ICA-SM payment corresponding to the bundle $B^1_x$ is either $0$ (if $B^0_x$ was the bundle with the least bid among the selected bundles) or $b^0_y$ for some bundle $B^0_y$, where $b^0_y$ is the highest bid less than $b^1_x$.
    \item Let $B^0_a$ be the selected bundle with the highest bid. Let $B^0_w$ be one of the unselected bundles (if all bundles are selected, consider the least valued bundle $B^0_z$ instead). Thus, $\exists t_u \in B^0_a \cap B^1_w$.\\
    When the transaction set is $\tau = T \setminus \{t_u\}$, $\nu(\tau \cup \{t_u\}) = \sum_x b^1_x$ and $\nu(\tau) = (\sum_x b^0_x) - b^0_a$, where $B^0_a$ is the bundle with the highest bid.\\
    Thus, the marginal contribution of coalition $\tau$ for Shapley value of $t_u$ is $\sum_x b^1_x - \sum_x b^0_x + b^0_a$.
\end{enumerate}
Using (1), (2) and (3), it can be argued that corresponding to every bundle $B^0_x$, there is a unique transaction $t_x$ which belongs to $T$ if and only if $B^0_x$ is selected. Thus, while considering all non-empty selections of bundles, we are considering $2^{\sqrt{n}} - 1$ unique subsets of $\mathcal{T}$. Each of these unique sets of transactions is involved in marginal contributions of some transaction $t_u$, as shown in (7). The marginal contributions $\sum_x b^1_x - \sum_x b^0_x + b^0_a$ are the sum of all bids with coefficients either $-1, 0$ or $1$. Thus, for these bid values (powers of $3$) and bundles and coefficients $-1, 0$, or $1$, all possible summations, and hence the marginal contributions are unique. Refer to the Appendix for the proof.


Thus, we proved that there exist instances of \mevgame\ in which the number of unique marginal contributions:
\begin{itemize}
    \item in terms of $n$: is more than $(2^{\sqrt{n}}-1)$. Thus, time complexity is $\Omega(2^{\sqrt{n}})$, i.e., \texttt{SUBEXP} in $n$.
    \item in terms $m$: is more than $(2^{m/2}-1)$. Thus, time complexity is $\Omega(2^{m/2})$, i.e., \texttt{EXP} in $m$.
\end{itemize}
\end{proof}

Games in which Shapley values can be computed in polynomial time often compute unique marginal contributions and find out how many times they occur. Both these steps must take polynomial time. In addition to these requirements, such games have a more well-defined structure that allows for optimizing the computation of Shapley value. \mevgame\ allows searchers to propose any arbitrary bundles. Any proposed algorithm must work for all such possible arbitrary bundles. With these two insights, we conjecture the following:

\begin{conjecture}
\label{con:hard}
    Shapley value computation of transaction creators in \mevgame\ with single-minded searchers can be SUBEXP in number of transaction creators $n$.
\end{conjecture}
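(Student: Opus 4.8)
The plan is to strengthen Theorem~\ref{thm:hard} from a statement about the number of distinct marginal contributions into a genuine lower bound on computing the aggregate Shapley value. The gap to be closed is that a \texttt{SUBEXP} count of \emph{unique} marginal values does not by itself forbid a fast algorithm: each $\varphi_{t_j}$ is a single binomially weighted sum $\sum_{S}\tfrac{|S|!(n-|S|-1)!}{n!}\bigl(\nu(S\cup t_j)-\nu(S)\bigr)$, and in principle such a sum could collapse to a succinctly computable quantity even when its summands are exponentially varied. The first step I would take is to rule out this collapse by encoding the multiplicities into the value itself. Building on the construction behind Theorem~\ref{thm:hard}, I would keep the two-class design ($\mathcal{B}_0$ as rows of $E$, $\mathcal{B}_1$ as transversals) but rescale the bids to powers of a base $B$ chosen larger than $n!$, so that the Shapley weights never produce carries across digit blocks. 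Then, written in base $B$, $\varphi_{t_u}$ for a designated transaction $t_u$ has disjoint digit windows, each recording the weighted number of coalitions realizing one particular marginal value; a single linear pass over the output then reconstructs all of these weighted counts at once.

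The second step is to identify a hard counting problem that these weighted counts encode. By points (4)--(7) of the proof of Theorem~\ref{thm:hard}, every subset $T\subseteq\mathcal{T}$ is resolved by ICA-SM (Algorithm~\ref{algo:icasm}) into an allocation determined solely by which $\mathcal{B}_0$ bundles are ``active,'' so the per-value multiplicity becomes the number of subfamilies of $\mathcal{B}_0$ exhibiting a prescribed intersection pattern with $\mathcal{B}_1$. By replacing the clean all-pairs-intersecting incidence used for the counting bound with a \emph{generic} bipartite incidence among the searchers, these multiplicities can be made to count objects such as independent sets or covers in an arbitrary bipartite graph on $\mathcal{S}$ — problems that are \#P-hard and, under the counting exponential time hypothesis, require $2^{\Omega(m)}$ time. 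Since the construction uses $m=\Theta(\sqrt n)$ searchers, a $2^{\Omega(m)}$ lower bound on recovering these counts transfers to a $2^{\Omega(\sqrt n)}=2^{o(n)}$, i.e.\ \texttt{SUBEXP}, lower bound on computing the Shapley value of \mevgame\ in $n$, as the conjecture asserts.

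I expect the main obstacle to be exactly the ``collapse'' issue: proving \emph{unconditionally} that no structure-aware algorithm can shortcut the binomially weighted aggregation. The digit-packing argument shows only that \emph{reading} $\varphi_{t_u}$ yields the hard counts; it does not rule out an algorithm that computes $\varphi_{t_u}$ by some route other than enumerating marginal contributions, which is why an unconditional bound seems out of reach and the realistic target is a \emph{conditional} statement (\#P-hardness assuming $\mathrm{FP}\neq\#\mathrm{P}$, or \texttt{SUBEXP}-hardness under \#ETH). The delicate technical point inside this plan is guaranteeing that the binomial weights interact benignly with the base-$B$ encoding — that the weighted multiplicity summed over the exponentially many coalitions feeding one window never overflows into an adjacent window — which dictates how large $B$ and how sparse the incidence structure must be, and is where most of the care in the construction will go.
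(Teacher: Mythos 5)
First, note that the paper does not actually prove this statement: it is offered as a conjecture, supported only by Theorem~\ref{thm:hard} (a count of distinct marginal contributions) together with the observation that the Harsanyi-dividend route also needs subexponentially many distinct values of $\nu$. Your diagnosis of why that evidence is not a computational lower bound --- a binomially weighted sum can collapse to something succinctly computable even when its summands are exponentially varied --- is exactly right, and your plan is more ambitious than anything the authors attempt. Targeting a conditional statement (\#P-hardness, or a lower bound under \#ETH) rather than an unconditional one is also the realistic framing.

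However, the digit-packing step as described fails. In the construction behind Theorem~\ref{thm:hard} the bids occupy only $2\sqrt{n}$ distinct powers of the base, so $n!\,\varphi_{t_u}$ written in base $B$ has only $O(\sqrt{n})$ digit windows; window $i$ stores the single signed aggregate $\sum_{S}|S|!\,(n-|S|-1)!\,c_i(S)$ with $c_i(S)\in\{-1,0,1\}$ the coefficient of $B^i$ in the marginal contribution of $t_u$ to $S$. That is $O(\sqrt{n})$ linear combinations of the per-value multiplicities, from which you cannot reconstruct the $2^{\sqrt{n}}-1$ individual multiplicities. Giving each distinct marginal value its own window would force bids of roughly $2^{\sqrt{n}}\cdot\Theta(n\log n)$ bits, at which point the input itself is subexponential in size and the claimed lower bound becomes vacuous relative to input length. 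The viable repair is the one used in existing Shapley-hardness results (e.g., for weighted voting games): arrange for one single quantity --- one digit window, or $\varphi_{t_u}$ itself --- to encode one \#P-hard count, rather than trying to read off all multiplicities at once. Your second step, replacing the clean $\mathcal{B}_0$--$\mathcal{B}_1$ incidence with a generic bipartite structure so that the relevant coalition count becomes \#P-hard, is the right ingredient for that; but the reduction must be rebuilt around extracting a single hard number from a polynomial-bit-length output, and you would still need to verify that ICA-SM's greedy allocation remains analyzable once the two defining intersection properties of the construction are relaxed.
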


\noindent\textbf{Unanimity Games and Shapley Value.} The structure of \mevgame\ can be derived from unanimity games. Let $U_S = (\mathcal{T},\omega_S)$ be unanimity game such that
\begin{equation}\forall T \subseteq \mathcal{T}, \omega_S(T) = 
\begin{cases}
    1\quad S\subseteq T\\
    0\quad \text{Otherwise}
\end{cases}
\end{equation}
The Shapley value of $t_j \in \mathcal{T}$ in $U_S$ is given by $\varphi_{t_j} = \frac{1}{\lvert S \rvert}$. The set of all unianimity functions $W = \{\omega_S| S\subseteq T\}$ forms a basis for the characteristic function $\nu:2^{\mathcal{T}}\rightarrow \mathbb{R}$ and $\nu$ can be represented as a linear combination of elements of $W$, i.e., $\forall C\subseteq T\setminus\phi, \nu(C) = \sum_{T \in 2^{\mathcal{T}}\setminus\phi} \Delta_T w_T(C)$, where $\Delta_T$s are referred to as \emph{Harsanyi dividends}. Based on unianimity game, the Shapley value can be written as 
\begin{equation}
    \varphi_{t_j} = \sum_{T\in 2^{\mathcal{T}}\setminus\phi,\ t_j \in T} \frac{\Delta_T}{\lvert T \rvert}
\end{equation}

The Harsanyi dividends $\Delta_T$ can be computed using the following equation \begin{equation}    
\Delta_T = \sum_{C\subseteq\mathcal{T}}(-1)^{\lvert \mathcal{T} \rvert-\lvert C \rvert} \nu(C)\end{equation} Note that $\Delta_T$ can be computed efficiently when $\lvert \{\nu(C) | C \in 2^{\mathcal{T}}\setminus \phi\}\rvert$ is polynomial in $n$ and $m$ i.e., the number of possible different revenues generated over all subset of searchers are polynomial.
With the construction provided above, one can observe that $\forall x \in\{1,\ldots\sqrt{n}\}, T_x = B^0_x \cup B^1_x$ the value of $\nu(T_x)$ is unique and thus, the number of distinct possible revenues can be sub-exponential. Hence, computing the dividends requires finding all the values. Therefore, the Shapley value computation via Unamity games would be sub-exponential.  Further, one can empirically verify Theorem~\ref{thm:hard} and observe that variation in the number of marginal contributions with an increasing number of transactions is sub-exponential. The details of our analysis are provided in the Appendix. To this end, we propose to use a randomized approach to approximate the Shapley value. 

\subsection{Approximating Shapley Value}
The Shapley value computation can go sub-exponential due to the underlying structure of the game, as shown in our construction. The occurrence of such structures is typically rare as the real world closely follows some distribution \footnote{We often see some transactions being more lucrative than others to almost all of the searchers and occasionally, some transactions being relatively highly valued by only a few (specialized) searchers~\cite{eigenphiDataEigenPhiWisdom}}. Due to this behavior (the number of different bundles submitted $\ll$ the total number of all possible bundles), the number of unique marginal contributions in Shapley value computation is $\mathcal{O}(n)$ (since most of them would be 0). So, we propose \ouralgo\, a randomized algorithm to compute the approximate Shapley value of each transaction. Algorithm~\ref{algo:rysp} describes \ouralgo. Let $\Pi$ be the set of all permutations of transactions $\Pi$. Let $\bar{\Pi}$ the set of $k$ different permutations sample from $\Pi$. Let $\bar{\Pi}_k$ denote the set of permutations sampled from $\Pi$. For each transaction $t_j$, the approximate Shapley value $\tilde{\varphi_j}$ is computed marginal contribution of ${t_j}$ to each $\pi \in \Pi$, averaged over $k$. Among the winning transactions selected via greedy approximation, the fraction of revenue redistributed to transaction creator $j$ is given by $\Gamma_{t_j}^{RSYP} = \frac{\tilde{\varphi}_{t_j}(\nu)}{\sum_{j\in[n]}\tilde{\varphi}_{t_j}(\nu)}$ using \ouralgo. We empirically show, for $k=\mathcal{O}(n^2)$, $\forall t_j \in \mathcal{T},  \Gamma_{t_j}^{\ouralgo}$ computed via \ouralgo\ approaches $\varphi_{t_j}(\nu)$. 



\begin{algorithm}
\caption{\ouralgo}
\begin{algorithmic}[1]
\STATE $\mathbf{Input:}$ $\bar{\Pi}$, $n$, $k$
\FOR{$j =1$ to $n$}
    \STATE $MC_{sum}$ = 0
    \FOR{$\pi \in \bar{\Pi}$}
    \STATE $MC$ = $\nu(\pi(j) \cup j) - \nu(\pi(j))$
    \STATE $MC_{sum}$ += $MC$
    \ENDFOR
    \STATE $\varphi_{t_j}(\nu) = \frac{MC_{sum}}{k}$
\ENDFOR
\FOR{$j =1$ to $n$}
\STATE $\Gamma_{t_j}^{RSYP} = \frac{\tilde{\varphi}_{t_j}(\nu)}{\sum_{j\in[n]}\tilde{\varphi}_{t_j}(\nu)}$
\ENDFOR
\STATE $\mathbf{Output:} \{\ \Gamma_{t_j}^{RSYP}\}_{j\in[n]}$
\end{algorithmic}
\label{algo:rysp}
\end{algorithm}

\subsubsection{Error Analysis}
The computation of the Shapley value of any transaction $t_j$ is essentially a sum of marginal contributions weighted by $\frac{1}{|N!|}$ as shown in Equation \ref{eq:shapley-form-2}. A marginal contribution of $t_j$ to $T \subseteq \mathcal{T}$ in \mevgame\ is the difference in revenue (sum of payments) when only the searchers $s_i: B_{s_i} \subseteq T \cup \{t_j\}$  are present and $\{s_i: B_{s_i} \subseteq T\}$ are present. Thus, the marginal contribution in $\varphi_j$ computation is at most the maximum revenue and at least the minimum revenue.

Let \(\Pi^{\text{ord}}\) be set of all elements of \(\Pi\), ordered with a some total order on \(\Pi\). Let $\pi_i$ represent the permutation at $i^{th}$ index in $\Pi^{\text{ord}}$. Let $X_i^j$ represent the marginal contribution of $t_j$ to $\pi_i(j)$. Then $\varphi_{t_j}$ can be written as $\frac{1}{N!}\sum_{i \in [N!]} X_i^j $. Let $K$ represent the indices of permutations drawn uniformly at random from $\Pi^{\text{ord}}$, then we have, $\tilde{\varphi}_{t_j} = \frac{1}{k}\sum_{k\in K} X_k^j$.

We want to find out the following $P(\tilde{\varphi}_{t_j} - \varphi_{t_j} \geq t) \le 1 - \delta$. Since $X_i^j$ are bounded, we use Hoeffding's inequality to find the upper bound on  $P(\tilde{\varphi}_{t_j} - \varphi_{t_j} \geq t)$. From Hoeffding's inequality, we know that, for $S_n = \sum_{i \in [n]} X_i$, where $X_i$ are independent random variables and $X_i \in [a_i,b_i]$,
\begin{equation}
    P(S_n - \mathbb{E}[S_n] \ge t) \le e^{\frac{-2t^2n^2}{\sum_{i \in [n]} (b_i-a_i)^2}}
    \label{eq:hoeffding}
\end{equation}

$\mathbb{E}[\tilde{\varphi}_{t_j}] = \mathbb{E}\big[\frac{1}{k}\sum_{k\in K}X_k^j \big] 
= \frac{1}{N!}\sum_{i \in [N!]} X_i^j =\varphi_{t_j}$. Thus, $P(\tilde{\varphi}_j - \varphi_j \geq t) = P(\tilde{\varphi}_j - \mathbb{E}[\tilde{\varphi}_j] \ge t)$. Thus, we have,

\begin{align}
    P(\tilde{\varphi}_j - \varphi_j \geq t) \le e^{\frac{-2t^2k^2}{\sum_{k \in K} (b-a)^2}} = e^{\frac{-2t^2k}{(b-a)^2}}
\end{align}

Let the sum of all bids $\sum_{i\in m}b_i = R^*$. Since, the revenue is at most $R^*$, we have $\forall i\in[N!], -R^* \le X_i^j \le R^*$, 
Using this, we get $k \ge \frac{2{R^*}^2}{t^2} \ln(\frac{1}{1 -\delta})$.

\section{Experimental Analysis}
\subsection{Setup}

We demonstrate the efficacy of \ouralgo\ in \mevgame\ by comparing the redistributions of Shapley values from the exact and approximation methods across 10K randomly generated \mevgame\ instances, each with varying searcher bids and transaction bundles. 

\subsection{Results} 

We observe the mean redistribution fraction via exact Shapley value and \ouralgo\ are almost the same. Figures~\ref{fig:varyT} and~\ref{fig:varyM} show similar behavior over all the transaction creators for varying $n$ and $m$.
Hence, $\Gamma_{\texttt{RSYP}} \rightarrow \Gamma_{t_j}^{SHAP}$ in $\texttt{RSYP}$ with $k=\mathcal{O}(n^2)$.


\begin{figure}[H]
    \centering
    \includegraphics[width=.8\linewidth]{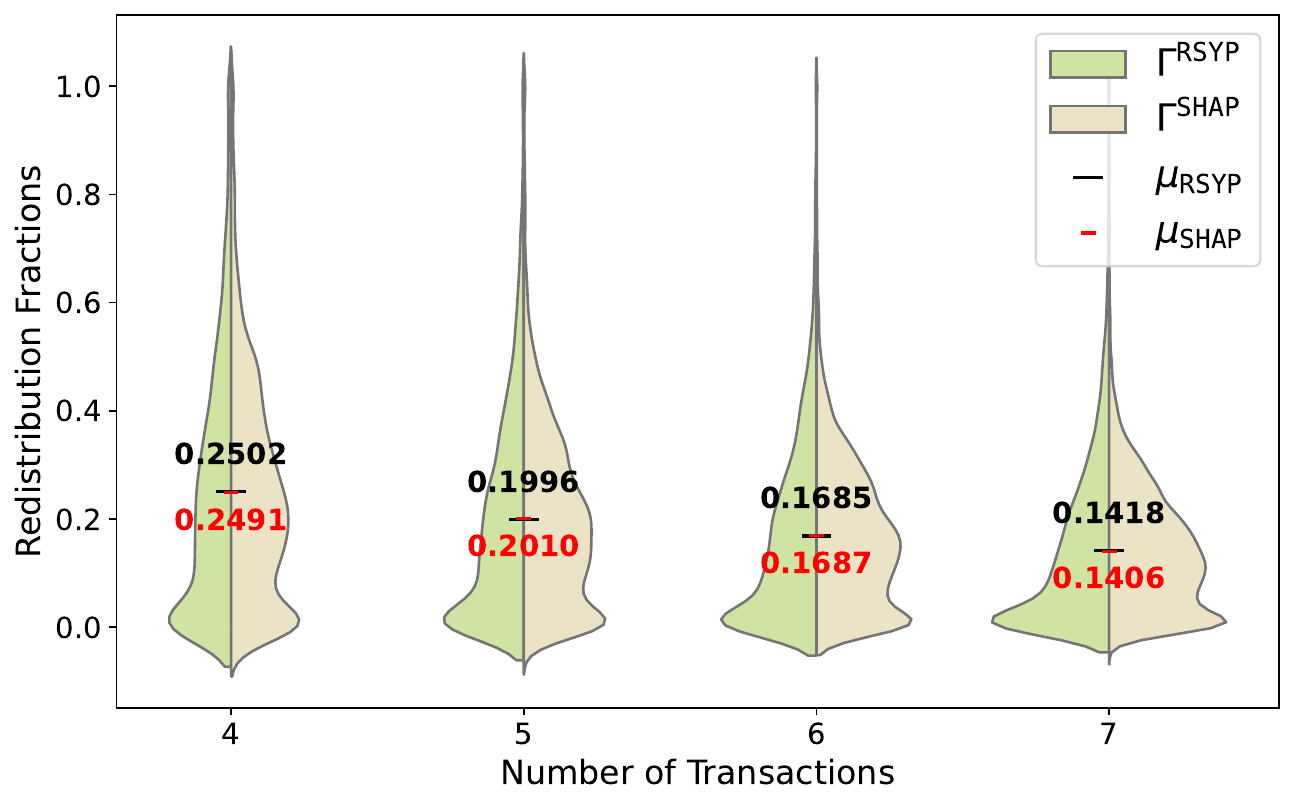}
    \caption{Distribution of $\Gamma^{\ouralgo\ },\Gamma^{\shap\ }$ vs $n$ for $m=6$}
    \label{fig:varyT}
\end{figure}

\begin{figure}[ht]
    \centering
    \includegraphics[width=0.8\linewidth]{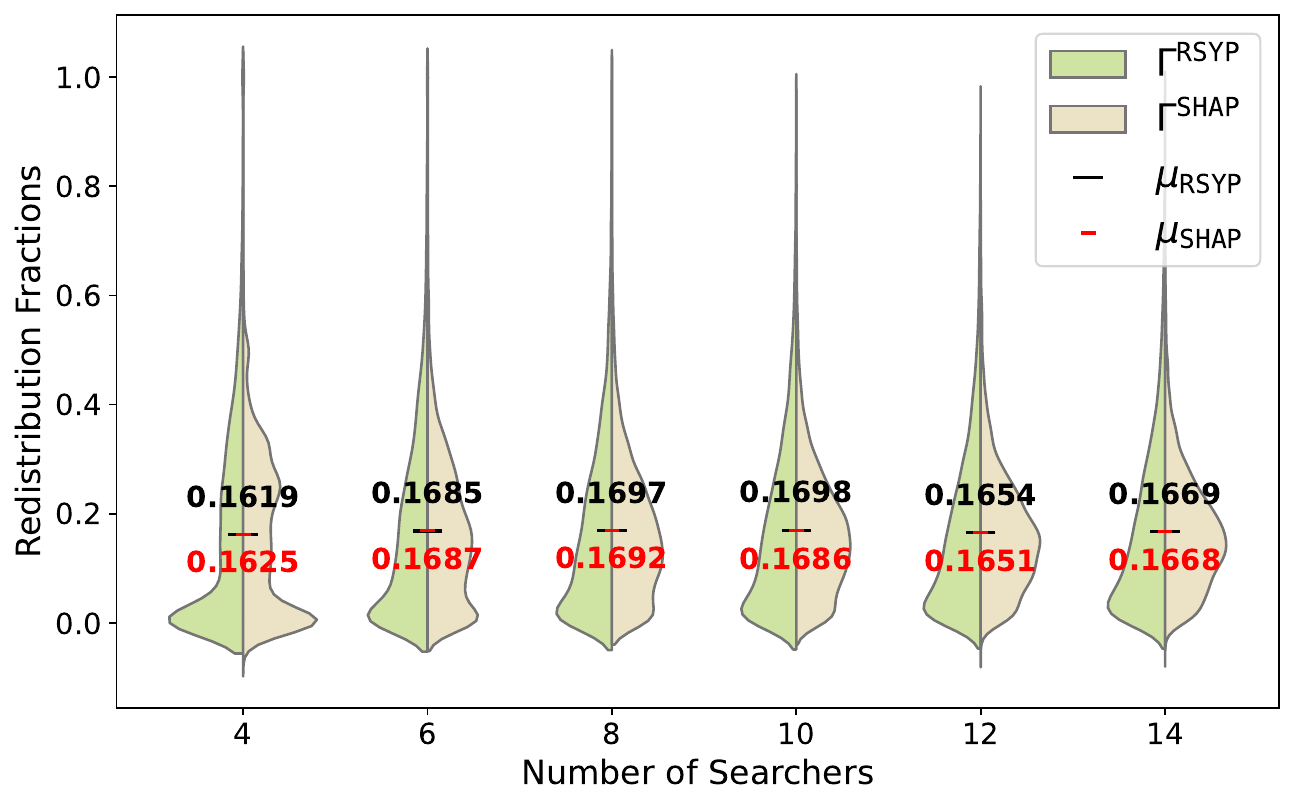}
    \caption{Distribution of $\Gamma^{\ouralgo\ },\Gamma^{\shap\ }$ vs $m$ for $n=6$}
    \label{fig:varyM}
\end{figure}

\section{Conclusion}
In this work, we explored the problem of matchmaking in MEV. We defined a cooperative game \mevgame\ over transaction creators and proved that computing Shapley value for fair revenue redistribution among transaction creators is SUBEXP. We proposed a randomized algorithm that approximates the Shapley value very well for $\mathcal{O}(n^2)$ where $n$ is the number of transactions.

\noindent\textbf{Future Work.} We leave the future to analyze the complexity of the fair revenue redistribution for more general valuations of searchers. We believe the complexity would be at least as much as in the single-minded case. 

\bibliographystyle{unsrt}  
\bibliography{references}
\appendix
\section{Matchmaking}
\subsection{Matchmaking vs OFA}
A regular OFA only ensures that private transactions are sold off to the highest bidder but doesn't ensure the timely execution of the transaction. Consider the scenario where the searcher chooses not to send the transactions to a block builder and retain them for the future. While an OFA can introduce a penalty for the searcher that misses the slot, where an OFA waits for a deterministic amount of time and forwards the transaction by itself or auctions it again,  it is hard to identify if the transactions did not make it to execution solely due to searcher and not due to network latency, high competition for blockspace. Hence, potentially enforcing cooperation is not easy. 
Further, the auctioneer can increase the revenue by selling multiple transactions simultaneously as a bundle to searchers rather than individually~\cite{match}. Matchmaking is a recent introduction in the MEV world, where a \emph{matchmaker} aggregates transactions from transaction creators, exposes transaction data to searchers, collects bids along with partial bundles, creates final bundles, and bids to builders for inclusion~\cite{share}. Searchers bid different subsets of transactions, and the matchmaker computes to find optimal allocation. This requires optimizing for the best bundles in a finite amount of time and redistributing the revenue back to the transaction creators to compensate for the value they create.

\subsection*{Desirable Properties of Matchmaking}
We now discuss the desirable properties of matchmaking. Matchmaking involves allocating transactions to searchers and compensating transaction creators. 
Mathematically, let:
   \begin{itemize}
       \item \( v_i \) be the true valuation of searcher \( i \) for the allocation $A \in \mathcal{A}$ (set of possible allocations).
       \item \( b_i \) be the bid of searcher \( i \).
       \item \( A(b) \) be the allocation rule based on all bids \( b = (b_1, b_2, \ldots, b_n) \).
       \item \( p_i(b) \) be the payment rule for searcher \( i \).
       \item \(r_i(b)\) be the reward for transaction creator $i$
   \end{itemize}
  
Some of the generally desired properties of auction mechanisms are:

\textbf{1. Incentive Compatibility (Truthfulness)}:  
A mechanism is incentive-compatible (or truthful) if each searcher's best strategy is to bid their true valuation of the good, regardless of what others are doing. In such a mechanism, searchers have no incentive to misrepresent their preferences.
   The mechanism is incentive-compatible for searchers if:
    \begin{equation}
   v_i(A(b_i, b_{-i})) - p_i(b_i, b_{-i}) \geq v_i(A(b'_i, b_{-i})) - p_i(b'_i, b_{-i}),     
    \end{equation}
   
   for all \( b'_i \), where \( b_{-i} \) are the bids of all other agents.

\textbf{2. Allocative Efficiency (Social Welfare)}  
Allocative Efficiency or Social Welfare Maximizing allocation refers to the allocation that maximizes total value for all searchers, i.e.,
   \begin{equation}
   \max_{A \in \mathcal{A}}\sum_{i=1}^{m} v_i(A(b)).
   \end{equation}

\textbf{3. Individual Rationality}  
A matchmaking $\mathcal{M}$ is individually rational if it is individually rational for both searchers and transaction creators. Each searcher is at least as well off by participating in the matchmaking as they would be if they chose not to participate. Similarly, the utility of each transaction creator is at least as well by including it in matchmaking as they would if they choose not to be involved in matchmaking. Formally, the utility \( u^{\mathcal{S}}_i \) of searcher \( i \) and utility \(u^{\mathcal{T}}_i \) should be non-negative:
   
   \begin{align}
   u^{\mathcal{S}}_i &= v_i(A(b)) - p_i(b) \geq 0.\\
   u^{\mathcal{T}}_i &= r_i \geq 0
   \end{align}
   
\textbf{4. No-deficit}  A matchmaking $\mathcal{M}$ is no-deficit if the total payments collected from the searcher $\mathcal{S}$ equal the total rebates paid to transaction creators of the winning transaction. That is, no money is lost from the process of matchmaking. The condition for no deficit is:
   \begin{equation}
   \sum_{i \in \mathcal{S^{*}}} p_i(b) = \sum_{i \in \mathcal{T^{*}}} r_i.    
   \end{equation}


\textbf{5. Fair redistribution} A matchmaking $\mathcal{M}$ is fair to transaction creators if rebate to the transaction creator of each winning transaction $\mathcal{T^{*}}$ is such that (i) $\sum_{i\in\mathcal{T^{*}}} r_i = R$, where $R$ is the revenue generated from searchers, (ii) $\forall i,j$ similar MEV transactions generating same value, $r_i = r_j$, and (iii) $r_i = 0$ for all transactions $i$ that do not create any MEV.






\subsection{Winner Determination with Additive Searchers}
\begin{algorithm}
\caption{SPA with VCG Payments}
\begin{algorithmic}[1]
\STATE \textbf{Input:} Bids \( b_{s_1}, b_{s_2}, \dots, b_{s_m} \) 
\STATE Initialize $p_{s_i} = 0$ for each searcher $s_i$ 
\FOR {each transaction $t_j \in T$}
    \STATE Find \( b_{s_i}[t_j] \) of each searcher $s_i$.
    \STATE Let \( {s_i}^* = \arg \max_{s_i} b_{s_i}[t_j] \) \COMMENT{Identify the participant with the highest bid}
    \STATE Let \( b_{s_\text{second}}[t_j] = \max_{s_i \neq {s_i}^*} b_{s_i}[t_j] \) \COMMENT{Find the second-highest bid}
    
    \STATE \textbf{Payments:}
    \FOR {each searcher \( s_i \) }
        \IF {$s_i = {s_i}^*$} 
            \STATE \( p_{s_i} = p_{s_i} + b_{s_\text{second}}[t_j] \) \COMMENT{Winner pays the second-highest bid}
        \ELSE
            \STATE \( p_{s_i}  = p_{s_i} + 0 \) \COMMENT{Non-winners pay nothing}
        \ENDIF
    \ENDFOR
\ENDFOR

\STATE \textbf{Output:} Winner \( {s_i}^* \) and payments \( p_{s_1}, p_{s_2}, \dots, p_{s_n} \)
\end{algorithmic}
\label{algo::spa}
\end{algorithm}

\section{Proving Uniqueness of Marginal Contributions}
\begin{theorem}
    $\forall (a_0, \ldots, a_{n-1}) \in \{-1,0,1\}^n$, the summation $s = \sum_{i=0}^{i=n-1} 3^i*a_i$ is unique.
\end{theorem}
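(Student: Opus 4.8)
The plan is to prove the statement as an injectivity result: the map $(a_0,\ldots,a_{n-1}) \mapsto \sum_{i=0}^{n-1} 3^i a_i$ from $\{-1,0,1\}^n$ to $\mathbb{Z}$ is one-to-one. This is exactly the uniqueness of a balanced ternary expansion, so the natural route is to assume two tuples collide and derive a contradiction by examining their highest differing coordinate. First I would set up the difference: suppose $(a_i)$ and $(a_i')$ are tuples with $\sum_i 3^i a_i = \sum_i 3^i a_i'$, and put $c_i = a_i - a_i'$. Then each $c_i \in \{-2,-1,0,1,2\}$ and $\sum_{i=0}^{n-1} 3^i c_i = 0$. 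If the two tuples differ, there is a largest index $k$ with $c_k \neq 0$, and the top term satisfies $|3^k c_k| \ge 3^k$ since $|c_k| \ge 1$.

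The second step is to bound the remaining lower-order terms and show they are too small to cancel the top term. Using $|c_i| \le 2$ and the geometric-series identity $\sum_{i=0}^{k-1} 3^i = (3^k-1)/2$, I get $\bigl|\sum_{i=0}^{k-1} 3^i c_i\bigr| \le 2\sum_{i=0}^{k-1} 3^i = 3^k - 1 < 3^k$. Since $c_i = 0$ for $i > k$, the full sum equals $\sum_{i=0}^{k} 3^i c_i$, whence $\bigl|\sum_{i=0}^{n-1} 3^i c_i\bigr| \ge |3^k c_k| - \bigl|\sum_{i=0}^{k-1} 3^i c_i\bigr| \ge 3^k - (3^k - 1) = 1 > 0$. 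This contradicts $\sum_i 3^i c_i = 0$, so no two distinct tuples can produce the same sum, and every summation $s$ is unique.

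I expect essentially no serious obstacle here; the entire argument rests on the single quantitative observation that the lower-order terms, even with the largest possible coefficients ($\pm 2$, arising as differences of two digits in $\{-1,0,1\}$), sum in absolute value to strictly less than the weight $3^k$ of the highest nonzero position. The only point requiring care is to track that the coefficient range after subtraction is $\{-2,\ldots,2\}$ rather than $\{-1,0,1\}$, so that the bound $2\cdot(3^k-1)/2 = 3^k-1$ is the right one to compare against $3^k$; this is precisely where the choice of base $3$ (which exceeds $2\cdot 1 + 1$) is essential.

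As a sanity check I would note the complementary counting perspective: there are exactly $3^n$ tuples, and all sums lie in the integer interval $[-(3^n-1)/2,\,(3^n-1)/2]$, which contains exactly $3^n$ integers, so injectivity is equivalent to surjectivity onto that range. However, the highest-differing-coordinate argument above establishes injectivity directly and is the line I would carry out in full.
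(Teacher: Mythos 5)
Your proof is correct, and it takes a genuinely different route from the paper's. The paper argues by induction on $n$: it bounds the attainable sums over the first $k$ digits by $\pm\frac{3^k-1}{2}$, then shows that for the three choices of the new top digit the three resulting ranges of sums are pairwise disjoint, so uniqueness propagates from $n=k$ to $n=k+1$. You instead prove injectivity directly by contradiction: subtract two colliding tuples to get coefficients $c_i\in\{-2,\dots,2\}$, isolate the highest index $k$ with $c_k\neq 0$, and observe that the lower-order terms are bounded in absolute value by $2\cdot\frac{3^k-1}{2}=3^k-1<3^k\le|3^kc_k|$, so the total cannot vanish. Both arguments ultimately rest on the same quantitative fact --- that $3^k$ strictly exceeds the maximal contribution of all lower positions --- but yours packages it as a single leading-term-domination step with no induction, which is arguably cleaner and makes explicit why base $3$ (exceeding $2\cdot 1+1$) is the threshold; the paper's version makes the disjointness of the three digit-ranges visible, which is perhaps more suggestive of how the balanced ternary representation partitions the integer interval. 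Your closing remark that the $3^n$ tuples biject onto the $3^n$ integers in $\bigl[-\frac{3^n-1}{2},\frac{3^n-1}{2}\bigr]$ is a nice complementary check that neither proof strictly needs.
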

\begin{proof}
    Proof by induction:\\
    Base case: $n=1$
    
    For $a_0 \in \{-1,0,1\}$, the summation $s$ can take value either $-1, 0$ or $1$. Thus, all three possible summations for $n=1$ are unique.\\
    Inductive step: Assume the theorem to be true for $ n = k$.

    The minimum summation possible for $(a_0, \ldots, a_{k-1})$ is when all $a_i = -1$. $$s_{\textrm{min}} = -1 \times \frac{3^k - 1}{3-1} = -\frac{3^k - 1}{2}$$.
    The maximum summation possible for $(a_0, \ldots, a_{k-1})$ is when all $a_i = 1$.
    $$s_{\textrm{max}} = 1 \times \frac{3^k - 1}{3-1} = \frac{3^k - 1}{2}$$

    For $n=k+1$, $a_{k+1}$ can take values either $-1, 0$ or $1$.
    \begin{itemize}
        \item $a_{k+1} = -1$: $s_{\textrm{min}}^0 = -\frac{3^{k+1} - 1}{2}$ and $s_{\textrm{max}}^0 = -\frac{3^k + 1}{2}$.
        \item $a_{k+1} = 0$: $s_{\textrm{min}}^1 = -\frac{3^k - 1}{2}$ and $s_{\textrm{max}}^1 = \frac{3^k - 1}{2}$
        \item $a_{k+1} = 1$: $s_{\textrm{min}}^2 = \frac{3^k + 1}{2}$ and $s_{\textrm{max}}^2 = \frac{3^k - 1}{2}$
    \end{itemize}
    As $s_{\textrm{max}}^0 < s_{\textrm{min}}^1$, $s_{\textrm{max}}^0 < s_{\textrm{min}}^2$ and  $s_{\textrm{max}}^1 < s_{\textrm{min}}^2$, neither ranges of values overlap. Thus, if all possible summations for $n=k$ are unique, then the summations for $n=k+1$ are also unique.\\
    By the principle of mathematical induction, the summations $s = \sum_{i=0}^{i=n-1} 3^i*a_i$ are unique.
\end{proof}
\section{Empirical Analysis}
Figure~\ref{fig:mConts} demonstrates the relationship between the number of unique values and the number of transactions, \(n\). The observed growth rate of the number of unique values closely follows the theoretical prediction of \(\Omega(2^{\sqrt{n}})\). This validates our theory that the number of unique values scales asymptotically as \(\Omega(2^{\sqrt{n}})\) with respect to \(n\), confirming the expected behavior. 

\begin{figure}[H]
\centering
   \begin{subfigure}{0.8\linewidth}
        \includegraphics[height=7cm,width=\textwidth]{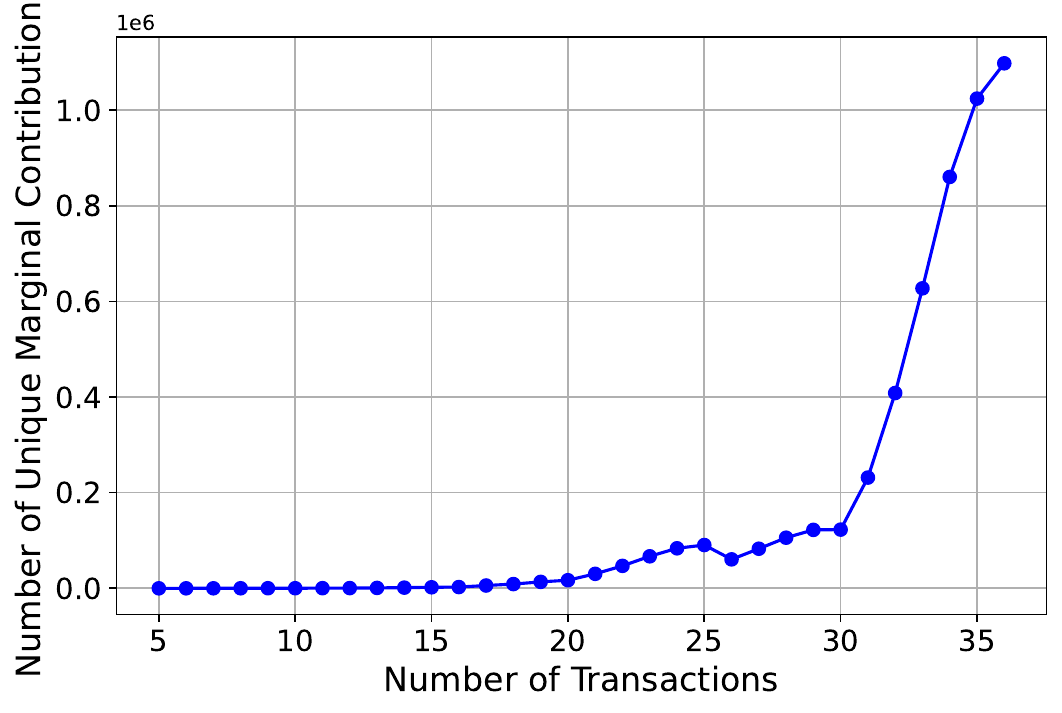}
        \caption{Linear Scale}
    \end{subfigure}

    \begin{subfigure}{0.8\linewidth}                   
        \includegraphics[height=7cm,width=\textwidth]{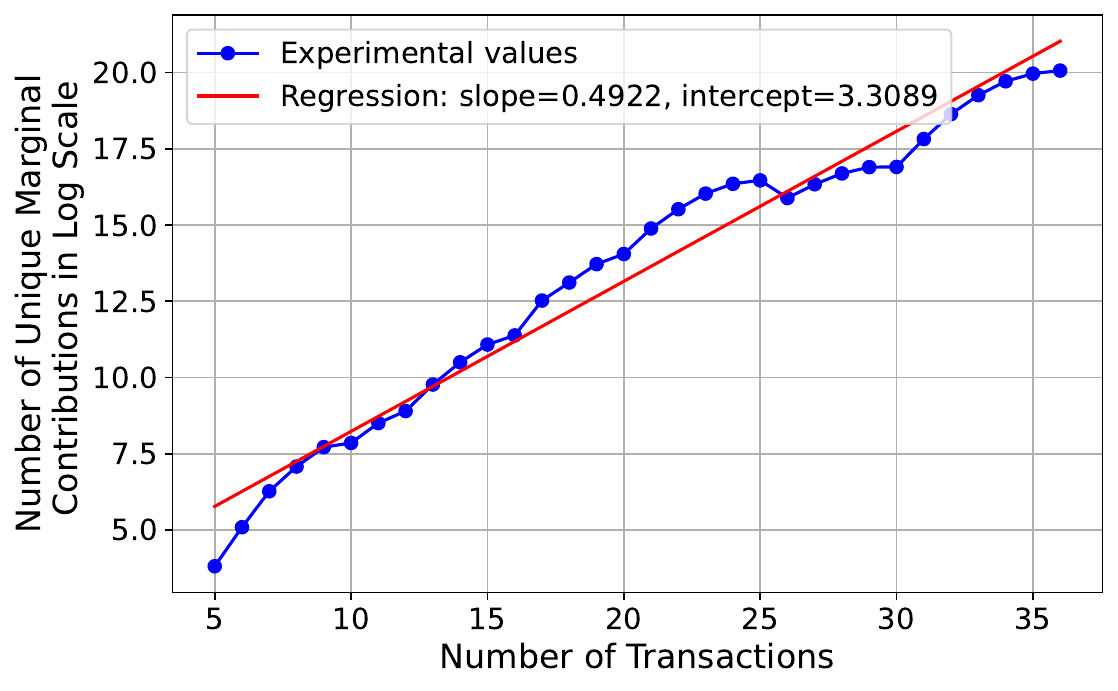}
        \caption{Log Scale}
    \end{subfigure}
    
    \caption{Variation of Unique Marginal Contributions}
    \label{fig:mConts}
\end{figure}
\section{Unianimity Games}
A unanimity game is a fundamental concept in cooperative game theory, characterized by a coalition structure in which a specific subset of players, called a winning coalition, must act unanimously for any payoff to be generated. The game's value depends on whether a coalition contains the required subset.

Let \( N \) be a finite set of players, and let \( S \subseteq N \) be a coalition (a subset of players). A unanimity game is represented by a characteristic function \( v: 2^N \to \mathbb{R} \) defined as:

\[
v(T) =
\begin{cases} 
1, & \text{if } S \subseteq T, \\
0, & \text{otherwise}.
\end{cases}
\]

where \( S \) is the winning coalition, i.e., the minimal subset of players required to generate a value. \( T \) is any coalition under consideration. \( v(T) \) gives the value of the coalition \( T \). It is \( 1 \) if \( T \) contains all players in \( S \), and \( 0 \) otherwise.

The Shapley value of \mevgame\ can derived from unanimity games in the following way. Let $U_S = (\mathcal{T},\omega_S)$ be unanimity game defined on set of transactions $\mathcal{T}$ such that
\begin{equation}\forall T \subseteq \mathcal{T}, \omega_S(T) = 
\begin{cases}
    1\quad S\subseteq T\\
    0\quad \text{Otherwise}
\end{cases}
\end{equation}

\begin{theorem}\
    \label{thm:hars-div}
    $\forall T\subseteq \mathcal{T}$, the value function can be uniquely expressed in terms of unanimity functions in $B_N$ with Harsanyi dividends as the coefficients
    \begin{equation}
        \nu(T) =\sum_{S\in 2^\mathcal{T} \setminus \phi} \Delta_{\nu,S} \omega_S(T)
    \end{equation}
\end{theorem}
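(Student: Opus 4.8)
The plan is to exhibit the unanimity games $\{\omega_S : S \in 2^{\mathcal{T}} \setminus \phi\}$ as a \emph{basis} of the vector space $V$ of characteristic functions on $\mathcal{T}$ (normalized so that $\nu(\phi) = 0$). Since a member of $V$ is determined by its values on the $2^n - 1$ non-empty subsets of $\mathcal{T}$ (with $n = |\mathcal{T}|$), we have $\dim V = 2^n - 1$, which is exactly the number of unanimity games. Consequently, it is enough to prove that these functions are \emph{linearly independent}; existence and uniqueness of the expansion then follow immediately from the definition of a basis.

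To prove linear independence, I would assume a vanishing combination $\sum_{S} c_S\,\omega_S \equiv 0$ and let $S_0$ be a set of \emph{minimal} cardinality among those with $c_{S_0} \neq 0$. Evaluating the combination at the coalition $T = S_0$ and using $\omega_S(S_0) = 1 \iff S \subseteq S_0$, every nonzero coefficient belongs to some $S$ with $|S| \geq |S_0|$; but such an $S$ can satisfy $S \subseteq S_0$ only when $S = S_0$. Hence the only surviving term is $c_{S_0}\,\omega_{S_0}(S_0) = c_{S_0}$, forcing $c_{S_0} = 0$, a contradiction. This establishes independence and therefore that the family is a basis, giving uniqueness of the coefficients.

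It remains to identify the coefficients as the Harsanyi dividends. Because $\omega_S(T) = 1$ precisely when $S \subseteq T$, the asserted identity $\nu(T) = \sum_{S} \Delta_{\nu,S}\,\omega_S(T)$ is equivalent to the lattice identity
\begin{equation}
    \nu(T) = \sum_{S \subseteq T,\, S \neq \phi} \Delta_{\nu,S}.
\end{equation}
This is exactly a Möbius inversion on the Boolean lattice $2^{\mathcal{T}}$, whose inverse is
\begin{equation}
    \Delta_{\nu,S} = \sum_{C \subseteq S} (-1)^{|S| - |C|}\,\nu(C),
\end{equation}
matching the dividend formula introduced earlier. I would verify this by substituting the second display into the first, swapping the order of summation, and collapsing the resulting alternating inner sum via $\sum_{j=0}^{k} (-1)^j \binom{k}{j} = 0$ for $k > 0$, which leaves only the term $\nu(T)$.

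The main obstacle is conceptual rather than computational: recognizing that uniqueness is precisely linear independence and handling it cleanly through the minimal-cardinality evaluation trick. The existence half and the explicit dividend formula are then routine consequences of Möbius inversion, with the only point of care being the $\nu(\phi) = 0$ normalization so that the sum over non-empty coalitions correctly recovers $\nu$.
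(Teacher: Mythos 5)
Your proposal is correct and follows the same overall strategy as the paper: exhibit the unanimity games $\{\omega_S : S \in 2^{\mathcal{T}}\setminus\phi\}$ as a basis of the space of characteristic functions and deduce existence and uniqueness of the coefficients. However, your execution is stronger at the two points where the paper's argument is thinnest. For linear independence, the paper essentially asserts that if some $b_S \neq 0$ then $\sum_S b_S\,\omega_S(T) \neq 0$ for some $T$, which restates the claim rather than proving it; your minimal-cardinality evaluation at $T = S_0$ (where only $S \subseteq S_0$ survive, and minimality kills all but $S = S_0$) is an actual proof of that step. Second, the paper stops at ``these $b_S$ are nothing but Harsanyi dividends'' without connecting the unique coefficients to the explicit alternating-sum formula; your M\"obius-inversion verification, collapsing the inner sum via $\sum_{j=0}^{k}(-1)^j\binom{k}{j}=0$ for $k>0$, supplies that identification (and, incidentally, uses the correct form $\Delta_{\nu,S} = \sum_{C\subseteq S}(-1)^{|S|-|C|}\nu(C)$, whereas the formula displayed in the paper's main text sums over $C \subseteq \mathcal{T}$ with exponent $|\mathcal{T}|-|C|$, which appears to be a typo). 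The one point you should make explicit is the normalization $\nu(\phi)=0$, which you already flag; it is needed both for the dimension count $2^n-1$ and for the empty-coalition term to vanish in the inversion.
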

\begin{proof}

    Let $E = 2^\mathcal{T}\setminus \phi$ i.e., $E = \{T_1, \ldots, T_{2^{\lvert\mathcal{T}\rvert}-1}\}$
    
    Let $\Lambda$ be the matrix formed using unanimity functions $\Lambda_{(i,j)} = \omega_{T_i}(T_j) \forall i,j\in\{0,\ldots,2^{\lvert\mathcal{T}\rvert}-1\}$ 
    
    \[
       \begin{array}{c|*{7}{c}|}
         \multicolumn{1}{c}{\omega} & \multicolumn{1}{c}{\mathbf{T_1}} & \multicolumn{1}{c}{\mathbf{T_2}} & \multicolumn{1}{c}{\mathbf{T_3}} & \multicolumn{1}{c}{\mathbf{T_4}} & \multicolumn{1}{c}{\cdots} & \multicolumn{1}{c}{\cdots} & \multicolumn{1}{c}{\mathbf{T_{2^{\lvert\mathcal{T}\rvert}-1}}}\\  
        \omega_{T_1} & 1 & 0 & 0 & 1  & \cdots & \cdots &  1 \\ 
        \omega_{T_2} & 0 & 1 & 0 & 1  & \cdots & \cdots &  1 \\ 
        \omega_{T_3} & 0 & 0 & 1 & 0  &\cdots & \cdots  &  1 \\ 
        \omega_{T_4} & 0 & 0 & 0 & 1  & \cdots & \cdots &  1 \\ 
        \omega_{T_5} & 0 & 0 & 0 & 0  & \cdots & \cdots &  1 \\ 
        \omega_{T_6} & 0 & 0 & 0 & 0  & 1 & \cdots &  1 \\ 
        \omega_{T_7} & 0 & 0 & 0 & 0  & 0 & \cdots &  1 \\ 
        \vdots & \vdots & \vdots & \vdots  & \vdots &  \vdots & \vdots & \vdots\\
        \vdots& \vdots & \vdots & \vdots & \vdots &  \vdots & \vdots & \vdots\\
        \omega_{T_{2^{\lvert\mathcal{T}\rvert}-1}} & 0 & 0 & 0 & 0 & 0 & 0 & 1\\ 
    \end{array}
    \]
    
    Assume there is a linear combination of unanimity functions that equals the zero function:
    \[
    \sum_{S\in E} b_S \omega_S(T) = 0, \quad \forall T \in E
    \]
    
    For this to hold, \( b_S \) must be zero for all \( S \subseteq \mathcal{T} \). The unanimity function \( \omega_S(T) \) is nonzero (equal to \( 1 \)) if and only if \( S \subseteq T \). This means \( \omega_S(T) \) activates only those terms where \( S \subseteq T \). If \( b_S \neq 0 \) for some \( S \subseteq \mathcal{T} \), then \( \sum_{S \subseteq \mathcal{T}} b_S \omega_S(T) \neq 0 \) for some \( T \). This contradicts the assumption that the sum equals \( 0 \) for all \( T \). Thus, \( b_S = 0 \) for all \( S \). This proves $\{\omega_S: S\in E\}$ are linearly independent.

    Further, consider the following equations:
    \begin{align*}
        \nu(T_1) &= \sum_{S\in E} b_S \omega_S(T_1)\\
        \nu(T_2) &= \sum_{S\in E} b_S \omega_S(T_2)\\
        \nu(T_3) &= \sum_{S\in E} b_S \omega_S(T_3)\\
        \vdots&\\
         \nu(T_{2^{\lvert\mathcal{T}\rvert}-1}) &= \sum_{S \in E} b_S \omega_S(T_{2^{\lvert\mathcal{T}\rvert}-1})\\
    \end{align*}

    From the $\Lambda$, it can be observed that there are $2^{\lvert\mathcal{T}\rvert}-1$ independent equations and $2^{\lvert\mathcal{T}\rvert}-1$ variables, hence there exists unique solution $\{b_T:T\subseteq\mathcal{T}\}$.
    
    Hence, $B_{\mathcal{T}}= \{\omega_T: T\subseteq \mathcal{T}\setminus\phi\}$ forms the basis for the characteristic function $\nu$. These $b_S$ are nothing but Harsanyi dividends represented as $\Delta_{\nu,S}$.

    $$
      \nu(T) =\sum_{S\in 2^\mathcal{T} \setminus \phi} \Delta_{\nu,S} \omega_S(T)
    $$
\end{proof}
\begin{theorem}
\label{thm:shap-uni}
The Shapley value of $t_j$ in the unanimity game $(\mathcal{T}, \omega_T)$ is:
\[
\varphi_{t_j}(\omega_T) =
\begin{cases}
\frac{1}{|T|}, & \text{if } t_j \in T, \\
0, & \text{if } t_j \notin T.
\end{cases}
\]    
\end{theorem}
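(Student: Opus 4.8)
The plan is to use the permutation form of the Shapley value in Equation~\ref{eq:shapley-form-2}, which expresses $\varphi_{t_j}(\omega_T)$ as the average over all $|\mathcal{T}|!$ orderings $\pi \in \Pi$ of the marginal contribution $\omega_T(\pi(t_j) \cup \{t_j\}) - \omega_T(\pi(t_j))$. The key observation is that for a $0/1$-valued unanimity game this marginal contribution is itself $0/1$-valued, so the Shapley value is exactly the fraction of permutations in which $t_j$ makes a nonzero contribution.

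First I would dispose of the case $t_j \notin T$. Since $t_j \notin T$, for every coalition $S$ we have $T \subseteq S \cup \{t_j\}$ if and only if $T \subseteq S$; hence $\omega_T(S \cup \{t_j\}) = \omega_T(S)$ and the marginal contribution of $t_j$ vanishes for every $S$ (equivalently, $t_j$ is a dummy player). By Equation~\ref{eq:shapley-form-2} this forces $\varphi_{t_j}(\omega_T) = 0$. Next, for $t_j \in T$, I would characterize exactly when $t_j$ contributes: the quantity $\omega_T(\pi(t_j)\cup\{t_j\}) - \omega_T(\pi(t_j))$ equals $1$ precisely when $T \subseteq \pi(t_j)\cup\{t_j\}$ but $T \not\subseteq \pi(t_j)$. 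Since $t_j \in T$ the second condition is automatic, so the contribution is $1$ if and only if $T \setminus \{t_j\} \subseteq \pi(t_j)$, i.e. if and only if every other member of $T$ precedes $t_j$ in $\pi$. Equivalently, $t_j$ is the last of the $|T|$ members of $T$ to appear in the ordering.

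Finally I would count these permutations. Restricting a uniformly random permutation of $\mathcal{T}$ to the $|T|$ elements of $T$ induces a uniformly random ordering of $T$, in which each of its elements is equally likely to appear last; hence the probability that $t_j$ is last among the members of $T$ is $1/|T|$. By Equation~\ref{eq:shapley-form-2} the Shapley value equals this probability, giving $\varphi_{t_j}(\omega_T) = 1/|T|$, which completes both cases.

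Because the statement is a classical fact about unanimity games, no step presents a genuine obstacle; the only point requiring care is the symmetry argument in the last step — justifying that the $|T|!$ relative orderings of the members of $T$ are equiprobable under a uniform permutation of all of $\mathcal{T}$, so that the last-element probability is exactly $1/|T|$ independently of $|\mathcal{T}|$. As a fallback that avoids this argument, I would note that one can instead evaluate the subset-sum form in Equation~\ref{eq:shap-val-form-1} directly: the summand is nonzero only for $S$ with $T\setminus\{t_j\}\subseteq S\subseteq\mathcal{T}\setminus\{t_j\}$, and summing the weights $\frac{|S|!(|\mathcal{T}|-|S|-1)!}{|\mathcal{T}|!}$ over these $S$ (grouped by how many elements of $\mathcal{T}\setminus T$ they contain) collapses via a standard binomial identity to the same value $1/|T|$.
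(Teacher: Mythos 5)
Your proof is correct, but it takes a different route from the paper's. The paper first notes (as you do) that the marginal contribution of $t_j$ to a coalition $C$ is $1$ exactly when $T\setminus\{t_j\}\subseteq C$, concludes that players outside $T$ are dummies and get $0$, and then for $t_j\in T$ simply invokes the symmetry and efficiency properties of the Shapley value: since $\omega_T(\mathcal{T})=1$ and all members of $T$ are interchangeable, each receives an equal share $\tfrac{1}{|T|}$. You instead evaluate the permutation formula (Equation~\ref{eq:shapley-form-2}) directly, observing that the marginal contribution is $1$ precisely when $t_j$ is the last member of $T$ to appear in the ordering, and that this event has probability $\tfrac{1}{|T|}$ under a uniform permutation. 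The paper's argument is shorter but leans on the axiomatic characterization (symmetry plus efficiency) rather than the defining formula; yours is fully self-contained and makes the combinatorics explicit, at the cost of having to justify that restricting a uniform permutation of $\mathcal{T}$ to $T$ yields a uniform ordering of $T$ --- a point you correctly flag and for which your binomial-identity fallback via Equation~\ref{eq:shap-val-form-1} is a sound alternative. Both arguments are standard and valid for this classical fact.
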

\begin{proof}
\[
\Delta \omega_T(C)(t_j) =
\begin{cases}
1, & T \setminus \{t_j\} \subseteq C, \\
0, & T \setminus \{t_j\} \not\subseteq C.
\end{cases}
\]

\[
\implies \forall t_j \notin S, \quad \varphi_{t_j}(\omega_T) = 0.
\]

\[
\omega_T(\mathcal{T}) = 1.
\]
\begin{equation}
    \varphi_{t_j}(\omega_T) = \frac{1}{N!} \sum_{C \subseteq \mathcal{N}_i} |C|! (N - |C| - 1)! \big(\omega_T(C \cup \{t_j\}) - \omega_T(C)\big)
\end{equation}

From symmetry, each transaction in $S$ receives an equal share of $\nu(\mathcal{T})$:
\[
\implies \forall i \in S, \quad \varphi_{t_j}(\omega_T) = \frac{1}{|T|}.
\]    
\end{proof}

\begin{theorem}
Any characterisitic function $\nu: 2^{\mathcal{T}} \setminus \{\emptyset\} \to \mathbb{R}_+$, the Shapley value can be expressed in terms of the Harsanyi dividends as:
\[
\varphi_{t_j}(\nu) = \sum_{T \subseteq 2^{\mathcal{T}} \setminus \{\emptyset\}, t_j \in T} \frac{\Delta_{\nu, T}}{|T|}.
\]
\end{theorem}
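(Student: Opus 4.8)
The plan is to combine the two results already established in this section—the uniqueness of the Harsanyi decomposition (Theorem~\ref{thm:hars-div}) and the Shapley value of a unanimity game (Theorem~\ref{thm:shap-uni})—with the linearity of the Shapley operator in its characteristic-function argument. First I would invoke Theorem~\ref{thm:hars-div} to write
\[
\nu = \sum_{T \in 2^{\mathcal{T}}\setminus\{\emptyset\}} \Delta_{\nu,T}\,\omega_T,
\]
expressing $\nu$ as a finite linear combination of unanimity functions whose coefficients are the Harsanyi dividends.

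The key step is to establish that $\varphi_{t_j}(\cdot)$ is linear in $\nu$. This follows directly by inspecting the defining formula in Equation~\ref{eq:shap-val-form-1}: for the fixed weights $\frac{\lvert S\rvert!(\lvert N\rvert-\lvert S\rvert-1)!}{\lvert N\rvert!}$, the Shapley value is a finite weighted sum of the marginal terms $\nu(S\cup j)-\nu(S)$, each of which is linear in $\nu$. Hence $\varphi_{t_j}(\alpha\nu_1+\beta\nu_2)=\alpha\,\varphi_{t_j}(\nu_1)+\beta\,\varphi_{t_j}(\nu_2)$, and by induction this extends to any finite linear combination. This additivity is precisely one of the Shapley axioms invoked earlier in the paper.

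Applying this linearity to the Harsanyi decomposition yields $\varphi_{t_j}(\nu) = \sum_{T} \Delta_{\nu,T}\,\varphi_{t_j}(\omega_T)$. I would then substitute the value of $\varphi_{t_j}(\omega_T)$ supplied by Theorem~\ref{thm:shap-uni}, namely $1/\lvert T\rvert$ when $t_j\in T$ and $0$ otherwise. Every term with $t_j\notin T$ vanishes, leaving exactly
\[
\varphi_{t_j}(\nu)=\sum_{T\subseteq 2^{\mathcal{T}}\setminus\{\emptyset\},\ t_j\in T}\frac{\Delta_{\nu,T}}{\lvert T\rvert},
\]
as claimed.

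This argument has no serious obstacle, since both ingredients are already proved and the decomposition is finite. The only points requiring care are that the linearity is applied over the correct index set of non-empty subsets, so that the basis property of Theorem~\ref{thm:hars-div} is respected, and that the interchange of the finite summation with the Shapley operator is licensed—which is immediate from additivity applied to finitely many summands.
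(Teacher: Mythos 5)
Your proposal is correct and follows essentially the same route as the paper: decompose $\nu$ into unanimity games via the Harsanyi dividends (Theorem~\ref{thm:hars-div}), push the Shapley operator through the finite linear combination, and substitute $\varphi_{t_j}(\omega_T)$ from Theorem~\ref{thm:shap-uni}. The only difference is presentational --- you invoke linearity of $\varphi_{t_j}$ as an abstract property justified by Equation~\ref{eq:shap-val-form-1}, whereas the paper carries out the same interchange of sums explicitly in a chain of equalities.
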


\begin{proof}

Let $N_S=|C|! \, (\lvert \mathcal{T} \rvert - |C| - 1)! $
\begin{align*}   
\varphi_{t_j}(\nu) &= \frac{1}{\lvert \mathcal{T} \rvert!} \sum_{C \subseteq \mathcal{T} \setminus \{t_j\}} \, N_C \{\nu(C \cup \{t_j\}) - \nu(C)\}.\\
 & \text{From Theorem } \ref{thm:hars-div}, \ref{thm:shap-uni}, \text{we have } \\
 &\nu(C) = \sum_{T \subseteq 2^{\mathcal{T}} \setminus \{\emptyset\}} \Delta_{\nu, T} \omega_T(C)\ \text{and}\\
 &\varphi_{t_j}(\omega_T) = \frac{1}{|T|} \quad \text{or } 0.\\
\varphi_{t_j}(\nu) &= \frac{1}{\lvert \mathcal{T} \rvert!} \sum_{C \subseteq \mathcal{T} \setminus \{t_j\}} N_C \big\{ \nu(C \cup \{t_j\}) - \nu(C) \big\} \\
&= \frac{1}{\lvert \mathcal{T} \rvert!} \sum_{C \subseteq \mathcal{T} \setminus \{t_j\}} N_C \Bigg( \sum_{T \subseteq 2^\mathcal{T} \setminus \emptyset} \Delta_{\nu,T} \omega_T(C \cup \{t_j\}) - \\
& \hspace{10em} \sum_{T \subseteq 2^\mathcal{T} \setminus \emptyset} \Delta_{\nu,T} \omega_T(C) \Bigg) \\
&= \frac{1}{\lvert \mathcal{T} \rvert!} \sum_{C \subseteq \mathcal{T} \setminus \{t_j\}} \sum_{T \subseteq 2^\mathcal{T} \setminus \emptyset} N_C \Delta_{\nu,T} \big( \omega_T(C \cup \{t_j\}) - \omega_S(C) \big) \\
&= \frac{1}{\lvert \mathcal{T} \rvert!} \sum_{C \subseteq \mathcal{T} \setminus \{t_j\}} \sum_{T \subseteq 2^\mathcal{T} \setminus \emptyset, t_j \in T} N_C \Delta_{\nu,T} \big( \omega_T(C \cup \{t_j\}) - \omega_T(C) \big)\\ 
&= \frac{1}{\lvert \mathcal{T} \rvert!} \sum_{T \subseteq 2^{\mathcal{T}} \setminus \emptyset, i \in T} \sum_{T \subseteq \mathcal{T} \setminus \{t_j\}} N_C \Delta_{\nu, T} \big(\omega_T(C \cup \{i\}) - \omega_T(C)\big)\\
&= \frac{1}{\lvert \mathcal{T} \rvert!} \sum_{T \subseteq 2^{\mathcal{T}} \setminus \emptyset, t_j \in T} \Delta_{\nu, T} \sum_{C \subseteq \mathcal{T} \setminus \{t_j\}} N_C \big(\omega_T(C \cup \{i\}) - \omega_T(C)\big)\\
&= \sum_{T \subseteq 2^{\mathcal{T}} \setminus \emptyset, t_j \in T} \Delta_{\nu, T} \Bigg( \frac{1}{\lvert \mathcal{T} \rvert!} \sum_{C \subseteq \mathcal{T}\setminus{t_j}} N_C \big(\omega_T(C \cup \{t_j\}) - \omega_T(C)\big) \Bigg)\\
&= \sum_{T \subseteq 2^{\mathcal{T}} \setminus \emptyset, t_j \in T} \Delta_{\nu, T} \varphi_{t_j}(\omega_T)\\
&= \sum_{T\subseteq 2^\mathcal{T}\setminus \phi, t_j \in T} \frac{\Delta_{\nu,T}}{\lvert T \rvert}
\end{align*}
\end{proof}


\end{document}